\newtheorem{theo}{Theorem}
\newtheorem{lemm}{Lemma}
\newtheorem{defi}{Definition}
\newcommand{\myexp}[1]{\exp{(#1)}}
\DeclarePairedDelimiter{\ceil}{\lceil}{\rceil}
\newcommand{\loadshare}{\alpha}
\title{Sprinklers: A Randomized Variable-Size Striping Approach to Reordering-Free Load-Balanced Switching}
\author{
Weijun Ding \footnote{School of Industrial and Systems Engineering, Georgia Institute of Technology, wding34@gatech.edu} 
\and 
Jim Xu \footnote{College of Computing, Georgia Institute of Technology, jx@cc.gatech.edu}
\and 
Jim Dai\footnote{School of Operations Research and Information
  Engineering, Cornell University, Ithaca, NY 14853; on leave from
  Georgia Institute of Technology,
jim.dai@cornell.edu} 
\and 
Yang Song\footnote{Electrical and Computer Engineering, University of California, San Diego, y6song@eng.ucsd.edu} 
\and 
Bill Lin\footnote{Electrical and Computer Engineering, University of California, San Diego, billlin@eng.ucsd.edu}
}
\begin{document}
\date{\normalsize June 7, 2014}
\maketitle

\begin{abstract}
Internet traffic continues to grow exponentially,
calling for switches that can scale well in both size and speed.
While load-balanced switches can achieve such scalability,
they suffer from a fundamental packet reordering problem.
Existing proposals 
either suffer from poor worst-case packet delays or require sophisticated matching mechanisms.
In this paper, we propose a new family of stable load-balanced switches called ``Sprinklers'' that has comparable implementation cost and performance as the baseline load-balanced switch, but yet can guarantee packet ordering.
The main idea is to force all packets within the same virtual output queue (VOQ) to traverse the same ``fat path'' through the switch, so that packet 
reordering cannot occur.
At the core of Sprinklers are two key innovations: 
a randomized way to determine the ``fat path'' for each VOQ, and
a way to determine its ``fatness'' roughly in proportion to the rate of the VOQ.
These innovations enable Sprinklers to achieve near-perfect load-balancing
\emph{under arbitrary admissible traffic}.
Proving this property rigorously using novel worst-case large deviation techniques is 
another key 
contribution of this work.
\end{abstract}

\section{Introduction}
\label{sec:intro}

Internet service providers need high-performance switch architectures that can scale well in both size and speed, provide throughput guarantees, achieve low latency, and maintain packet ordering.
However, conventional crossbar-based switch architectures with centralized scheduling and 
arbitrary per-packet dynamic switch configurations are not scalable.

An alternative class of switch architecture is the load-balanced switch,
first introduced by Chang et al.~\cite{chang2002load,chang2002load2},
and later further developed by
others (e.g.~\cite{jaramillo2008padded,keslassy2004load,lin2010concurrent}).
These architectures rely on two switching stages for routing packets.
Fig.~\ref{fig:load balance} shows a diagram of a generic
two-stage load-balanced switch. The first switch connects the first
stage of input ports to the center stage of intermediate ports,
and the second switch connects the center stage of intermediate ports
to the final stage of output ports. Both switching stages execute a deterministic
connection pattern such that each input to a switching stage is connected to each
output of the switch at 1/$N$th of the time.  This can be implemented for example
using a deterministic round-robin switch (see Sec.~\ref{sec: lsf scheduling}).  Alternatively, as shown in~\cite{keslassy2004load},
the deterministic connection pattern can also be efficiently implemented using optics in
which all inputs are connected to all outputs of a switching stage in parallel
at a rate of 1/$N$th the line rate.
This class of architectures appears to be a practical way to scale high-performance switches to
very high capacities and line rates. 

\begin{figure}[htb]
  \centering
    \includegraphics[width=0.45\textwidth]{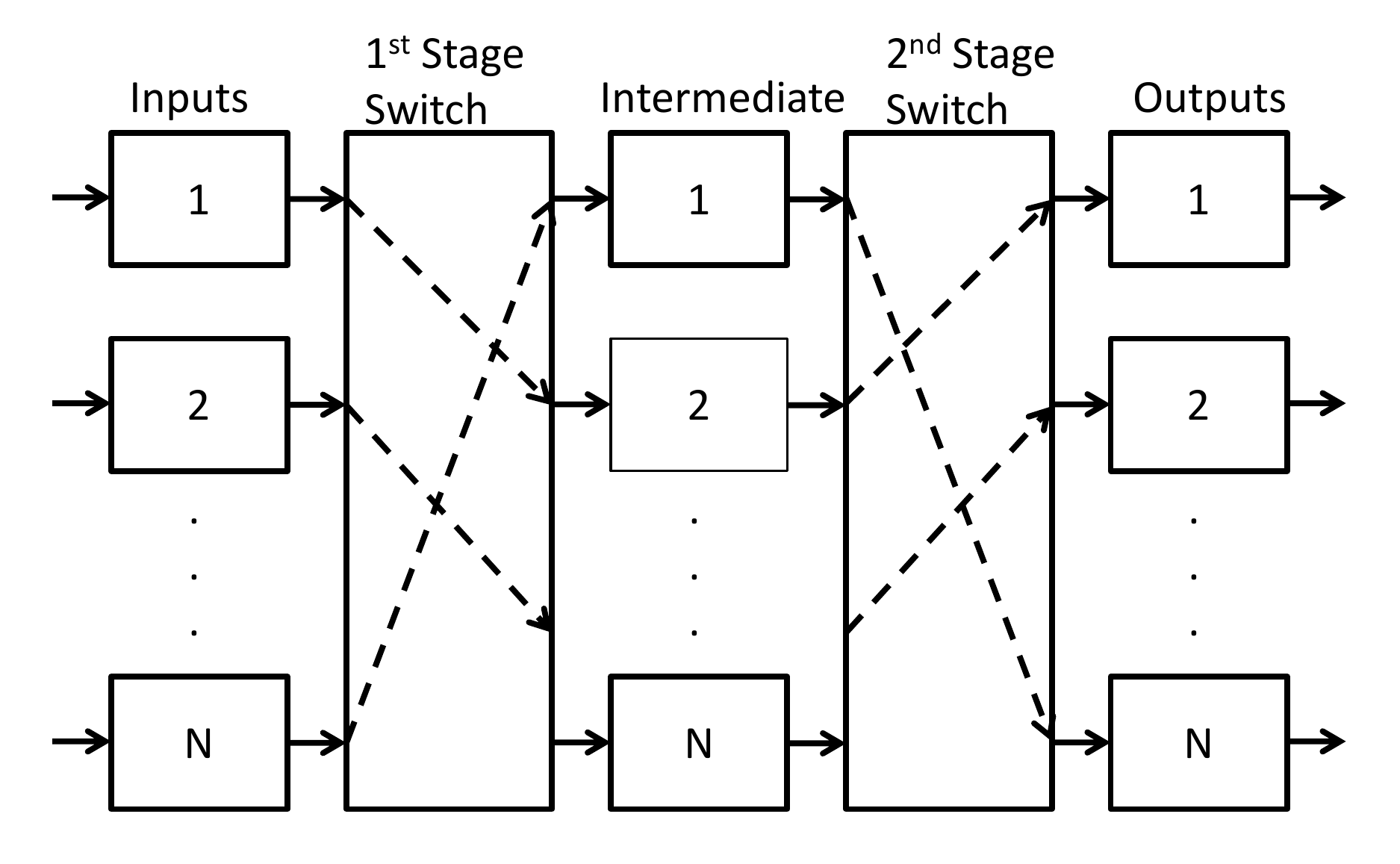}
    \caption{Generic load-balanced switch.}
\label{fig:load balance}
\end{figure}

Although the baseline load-balanced switch originally proposed
in~\cite{chang2002load} is capable of achieving throughput guarantees, it
has the critical problem that packet departures can be badly 
out-of-order.
In the baseline load-balanced switch,
consecutive packets at an input port are spread to all $N$ intermediate ports upon arrival.
Packets going through different intermediate ports may encounter \emph{different queueing delays}.
Thus, some of these packets may arrive at their output ports 
out-of-order.
This is detrimental to Internet traffic since the
widely used TCP transport protocol falsely regards out-of-order
packets as indications of congestion and packet loss.
Therefore, a number of
researchers have explored this
packet ordering problem (e.g.~\cite{jaramillo2008padded,keslassy2004load,lin2010concurrent}).

\subsection{Our approach}

In this paper, we propose ``Sprinklers,'' a new load-balanced switching solution that has comparable implementation cost and computational complexity, and similar performance guarantees as the baseline load-balanced switch, yet can guarantee packet ordering.
The Sprinklers approach has a mild flavor of a simple,
yet flawed solution called ``TCP hashing''\footnote{TCP hashing is referred to as ``Application Flow Based Routing'' (AFBR) in~\cite{keslassy2004load}.}~\cite{keslassy2004load}.
The idea of TCP hashing is to force all packets at an input port that belong to the same application (TCP or UDP) flow to go through the same \emph{randomly chosen} intermediate port by hashing on the packet's flow identifier.
This approach ensures that all packets that belong to the same application flow will depart the switch in order because they will encounter the \emph{same queueing delay} through the same intermediate port.
Although the TCP hashing scheme is simple and intuitive, it cannot guarantee stability because an intermediate port can easily be oversubscribed by having too many \emph{large} flows randomly assigned to it.

Like the TCP hashing approach in which all packets within an application flow are forced to go through the same switching path,
a Sprinklers switch requires all packets within a Virtual Output Queue (VOQ)
to go down the \emph{same} ``fat path'' called a stripe.
Specifically, for each VOQ, a Sprinklers switch stripes packets within a VOQ across an interval of consecutive
intermediate ports, which we call a \emph{stripe interval}.
The number of intermediate ports $L$ in the stripe interval is referred to as its \emph{size}.
In the Sprinklers approach, the size $L$ of the stripe interval is determined roughly in proportion to the \emph{rate} of the VOQ,
and the placement of the $L$ consecutive intermediate ports is by means of \emph{random permutation}.
Therefore, the $N^2$ stripe intervals, corresponding to the $N^2$ VOQs at the $N$ input ports,
can be very different from one another in both sizes and placements.
The placement of variable-size stripes by means of random permutation ensures
that the traffic loads associated with them get evenly distributed across the $N$ intermediate ports.

For example, suppose a VOQ is assigned to the stripe interval $(8, 12]$ based on its arrival rate and random permutation,
which corresponds to intermediate ports $\{9, 10, 11, 12\}$.
Then the incoming packets belonging to this VOQ will be grouped in arrival order into stripes of $L = 4$ packets each.
Once a stripe is filled, the $L = 4$ packets in the stripe will be transmitted to intermediate ports
$9$, $10$, $11$, and $12$, respectively.



By making a stripe the basic unit of scheduling at both input and intermediate ports, 
a Sprinklers switch ensures that every stripe of packets departs from
its input port and arrives at its output port both ``in one burst''
(i.e., in consecutive time slots).  This ``no interleaving between the servicing of two stripes" service guarantee, combined with 
the FCFS order in which packets within a stripe and stripes within a VOQ are served, ensures that 
packet reordering cannot happen within any VOQ, and hence cannot happen within any application flow either.

\subsection{Contributions of the paper}

This paper makes the following major contributions:
\begin{itemize}
\item First, we introduce the design of a new load-balanced switch architecture based on randomized and variable-size striping that we call Sprinklers.
Sprinklers is indeed scalable in that all its algorithmic aspects can be implemented in constant time at each input port and 
intermediate port in a fully distributed manner.

\item Second, we develop novel large deviation techniques to prove that Sprinklers is stable under intensive arrival rates with overwhelming probability, while guaranteeing packet order.


\item Third, to our knowledge, 
Sprinklers is the first load-balanced switch architecture based on randomization that can guarantee both stability and packet ordering, which we hope 
will become
the catalyst to a rich family of solutions based on the simple principles of 
randomization and variable stripe sizing.
\end{itemize}

\subsection{Outline of the paper}

The rest of the paper is organized as follows.
Sec.~\ref{sec:related} provides a brief review of existing load-balanced switch solutions.
Sec.~\ref{sec:scheme} introduces the structure of our switch and explains how it works.
Sec.~\ref{sec:stability} provides a rigorous proof of stability based on the use of
convex optimization theory and negative association theory to develop a Chernoff bound for the overload probability.
This section also presents our worst-case large deviation results.
Sec.~\ref{sec:simu} evaluates our proposed architecture, and
Sec.~\ref{sec:conc} concludes the paper.

\section{Existing Load-Balanced Switch Solutions}
\label{sec:related}

\subsection{Hashing-Based}

As discussed in Sec.~\ref{sec:intro}, packets belonging to the same TCP flow can be guaranteed to depart from their output port in order if they are forced to go through the same intermediate port.
The selection of intermediate port can be easily achieved by hashing on the packet header to obtain a value from 1 to $N$.
Despite its simplicity, the main drawback of this TCP hashing approach is that stability cannot be guaranteed~\cite{keslassy2004load}.

\subsection{Aggregation-Based}

An alternative class of algorithms to hashing is based on aggregation of packets into frames.
One approach called Uniform Frame Spreading (UFS)~\cite{keslassy2004load} prevents reordering by requiring that each input first accumulates a full-frame of $N$ packets, all going to the same output, before uniformly spreading the $N$ packets to the $N$ intermediate ports.
Packets are accumulated in separate virtual output queues (VOQs) at each input for storing packets in accordance to their output.
When a full-frame is available, the $N$ packets are spread by placing one packet at each of the $N$ intermediate ports.
This ensures that the lengths of the queues of packets destined to the same output are the same at every intermediate port,
which ensures every packet going to the same output experiences the same queuing delay independent of the path that it takes from input to output.
Although it has been shown in~\cite{keslassy2004load} that UFS achieves 100\% throughput for any admissible traffic pattern,
the main drawback of UFS is that it suffers from long delays, $O(N^3)$ delay in the worst-case,
due to the need to wait for a full-frame before transmission.
The performance of UFS is particularly bad at light loads because slow packet arrivals lead to much longer accumulation times.

An alternative aggregation-based algorithm that avoids the need to wait for a full-frame is called Full Ordered Frames First (FOFF)~\cite{keslassy2004load}.
As with UFS, FOFF maintains VOQs at each input.
Whenever possible, FOFF will serve full-frames first.
When there is no full-frame available, FOFF will serve the other queues in a round-robin manner.
However, when incomplete frames are served, packets can arrive at the output out of order.
It has been shown in~\cite{keslassy2004load} that the amount of reordering is always bounded by $O(N^2)$ with FOFF.
Therefore, FOFF adds a reordering buffer of size $O(N^2)$ at each output to ensure that packets depart in order.
It has been shown in~\cite{keslassy2004load} that FOFF achieves 100\% throughput for any admissible traffic pattern,
but the added reordering buffers lead to an $O(N^2)$ in packet delays.

Another aggregation-based algorithm called Padded Frames (PF)~\cite{jaramillo2008padded} was proposed to avoid the need to accumulate full-frames.
Like FOFF, whenever possible, FOFF will serve full-frames first.
When 
no full-frame is available, PF will search among its VOQ at each input to find the longest one.
If the length of the longest queue exceeds some threshold $T$, PF will pad the frame with
fake packets
to create a full-frame.
This full-frame of packets, including the fake packets, are uniformly spread across the $N$ intermediate ports, just like UFS.
It has been shown in~\cite{jaramillo2008padded} that PF achieves 100\% throughput for any admissible traffic pattern,
but its worst-case delay bound is still $O(N^3)$.

\subsection{Matching-Based}

Finally, packet ordering can be guaranteed in load-balanced switches via another approach called a Concurrent Matching Switch (CMS)~\cite{lin2010concurrent}.
Like hashing-based and aggregation-based load-balanced switch designs,
CMS is also a fully distributed solution.
However, instead of bounding the amount of packet reordering through the switch, or requiring packet aggregation,
a CMS enforces packet ordering throughout the switch by using a
fully distributed load-balanced scheduling approach.
Instead of load-balancing packets, a CMS load-balances request
tokens among intermediate ports, where each intermediate
port concurrently solves a local matching
problem based only on its local token count.
Then, each intermediate
port independently selects a VOQ from each input to serve, such that the packets
selected can traverse the two load-balanced switch stages without conflicts.
Packets from selected VOQs depart in order from the inputs, through the
intermediate ports, and finally through the outputs.
Each intermediate port has $N$ time slots to
perform each matching, so the complexity of existing matching
algorithms can be amortized by a factor of $N$.

\section{Our scheme}
\label{sec:scheme}

Sprinklers has the same architecture as the baseline load-balanced switch
(see Fig.~\ref{fig:load balance}), but it differs in the way that
it routes and schedules packets for service at the input
and intermediate ports. Also, Sprinklers have $N$ VOQs at each input port.
In this section, we first provide some intuition behind the Sprinklers approach,
followed by how the Sprinklers switch operates,
including the striping mechanism for routing packets through the switch
and the companion stripe scheduling policy.

\subsection{Intuition Behind the Sprinklers Approach}

The Sprinklers approach is based on three techniques for balancing traffic evenly across all $N$ intermediate ports:
permutation, randomization, and variable-size striping.
To provide some intuition as to why all three techniques are necessary, we use
an analogy from which the name Sprinklers is derived.
Consider the task of watering a lawn consisting of $N$ identically sized areas using $N$ sprinklers with different pressure.
The objective is to distribute an (ideally) identical amount of water to each area.  This
corresponds to evenly distributing traffic inside the $N$ VOQs (sprinklers) entering a certain input port to the $N$ intermediate
ports (lawn areas)  
under the above-mentioned constraint that all traffic inside a VOQ must go through the same 
set of intermediate ports (i.e., the stripe interval) to which the VOQ is mapped.


An intuitive and sensible first step is to aim exactly one sprinkler at each lawn area,
since aiming more than one sprinklers at one lawn area
clearly could lead to it being flooded.
In other words, the ``aiming function,'' or ``which sprinkler is aimed at which lawn area,'' 
is essentially a permutation over the set $\{1, 2, \cdots, N\}$.
This permutation alone however cannot do the trick, because the water pressure (traffic rate) is 
different from one sprinkler to another, and the lawn area aimed at by a sprinkler with high water 
pressure will surely be flooded.
To deal with such disparity in water pressures, we 
set the ``spray angle range'' of a sprinkler, which corresponds to the size (say $L$) of the stripe interval for the
corresponding VOQ, proportional to its water pressure, and evenly distribute this water pressure 
across the 
$L$ ``streams'' of water that go to the target lawn area and $L-1$ ``neighboring'' lawn areas.

However, such water pressure equalization (i.e., variable stripe sizing) alone does not prevent all scenarios of load-imbalance 
because it shuffles water around only ``locally."
For example, if a larger than average 
number of high pressure sprinklers are aimed at a cluster of lawn areas close to one another, some area within 
this cluster will be flooded.   Hence, 
a simple yet powerful randomized algorithm is brought in 
to shuffle water around globally: we simply sample this permutation 
at uniform random
from the set of all $N!$ permutations.



Besides load-balancing, there is another important reason for the size of a stripe interval to be set roughly 
proportional to the traffic rate of the corresponding VOQ.  In 
some existing solutions to the packet reordering
problem with load-balanced switches, each VOQ has to accumulate a full frame of $N$ packets before the frame can depart from the 
input port.  For a VOQ with low traffic rate, this buffering delay could be painfully long.  By adopting
rate-proportional
stripe sizing, a Sprinklers switch significantly reduces the buffering delays experienced by the low-rate VOQs.

As far as the Sprinklers analogy goes, the combination of randomized permutation and water pressure equalization,
with proper ``manifoldization'' (i.e., considering $N + 1$ as $1$),
will provably ensure that the amount of water going to each lawn area is very even with high 
probability.   However, because the servicing of any two stripes cannot interleave in a Sprinklers switch (to ensure correct
packet order), two stripes have to be serviced in two different frames ($N$ time slots),
even if their stripe intervals overlap only slightly.  
A rampant occurrence of such slight overlaps, which can happen if the stripe interval size of a VOQ is set strictly proportional
to the rate of the VOQ (rounded to an integer), 
will result in gross waste of service capacity, and significantly reduce the maximum achievable throughput.

Therefore, we would like any two stripe intervals to either ``bear hug"
(i.e., one contained entirely in the other) or does not touch (no overlap between the intervals) 
each other.  Our solution is a classical ``computer science'' one: making $N$ a power of 2 (very reasonable 
in the modern switching literature) and every stripe interval a dyadic one (resulting from dividing the 
whole interval (0, N] into $2^k$ equal-sized subintervals for an integer $k \le \log_2 N$).  
Now that the spray angle range of a sprinkler has 
to be a power of 2, the water pressure per stream could vary from one sprinkler to another by a maximum factor of 2.  
However, as shown later in Sec.~\ref{sec:stability},
strong statistical load-balancing guarantees can still be rigorously proven
despite such variations.

\subsection{Operations of the Sprinklers Switch}

As just explained, traffic inside each of the $N^2$ VOQs is switched through
a dyadic interval of intermediate ports that is just large enough to bring the load imposed by the VOQ on any intermediate port 
within this interval (i.e., ``water pressure per stream") below a certain threshold.  
The sizes of the corresponding $N^2$ stripe intervals are determined by the respective rates of the corresponding VOQs,
and their placements to consecutive intermediate ports are performed using a randomized algorithm.
Once generated, their placements remain fixed thereafter, while their sizes 
could change when their respective rates do.
Our goal in designing this randomized algorithm is that, 
when switched through the resulting (random) stripe intervals,  
traffic going out of any input port or going into any output port is near-perfectly balanced 
across all $N$ intermediate ports, with overwhelming probabilities.  

Once the $N^2$ stripe intervals are generated and fixed, packets in each VOQ will be striped across its
corresponding interval of intermediate ports as follows.  
Fix an arbitrary VOQ, and let its stripe size and interval be $2^k$ and $(\ell, \ell+2^k]
\equiv \{\ell+1, \ell+2, \cdots, \ell+2^k\}$, respectively.
(The integer $\ell$ must be divisible by $2^k$ for the interval to be dyadic, as discussed earlier.)
Packets in this VOQ are divided, chronologically according to their arrival times, 
into groups of $2^k$ packets each and,
with a slight abuse of the term, we refer to each such group also as a stripe.  Such a
stripe will eventually be switched through the set of intermediate ports $\{\ell+1, \ell+2, \cdots, \ell+2^k\}$
as follows. Assume this switching operation starts at time (slot) $t$, when the corresponding input port 
is connected to the intermediate port $\ell+1$ by the first switching fabric.  Then, following the periodic connection 
sequence of the first switching fabric, the input port forwards the first packet in the stripe to 
the intermediate port $\ell+1$ at time $t$, the second packet to the intermediate port $\ell+2$ at time $t + 1$, and so on,
until it forwards the last packet in the stripe to the intermediate port $\ell+2^k$ at time $t+ 2^k -1$.  That is, packets in this stripe
go out of the input port to consecutive intermediate ports in consecutive time slots (i.e., ``continuously").  The same can be said about other stripes from this and other VOQs.
This way, at each input port, the (switching) service is rendered by the first switching fabric in a stripe-by-stripe manner (i.e., finish serving 
one stripe before starting to serve another).


At each input port, packets from the $N$ VOQs originated from it compete for (the switching) service by the first switching fabric
and hence must be arbitrated by a scheduler.   However, since the service is rendered stripe-by-stripe, as explained above, the scheduling
is really performed among the competing stripes.  In general, two different scheduling policies are needed in a Sprinklers switch, one used
at the input ports and the other at intermediate ports.  
Designing stripe scheduling policies that are well suited for a Sprinklers switch 
turns out to be a difficult undertaking because two tricky design requirements have to be met simultaneously.  

The first requirement is that the resulting scheduling policies must facilitate a highly efficient utilization of the switching capacity of both switching fabrics.
To see this, consider input port $i$, whose input link has a normalized rate of 1, and intermediate port $\ell$.  
As explained in 
Sec.~\ref{sec:intro}, these two ports are connected only once every $N$ time slots, by the first switching fabric.  
Consider the set of packets at input port $i$ that needs to be switched to the intermediate port $\ell$.  
This set can be viewed and called a queue (a queueing-theoretic concept), because 
the stripe scheduling policy naturally induces a service order (and hence a service process)
on this set of packets, and the arrival time of each packet is simply that of the stripe containing the packet.
Clearly, the service rate of this queue is exactly $1/N$.  Suppose this input port is heavily loaded 
say with a normalized arrival rate of 0.95.  Then even with perfect load-balancing, the arrival rate to this queue is $0.95/N$, only slightly below the service rate $1/N$.  
Clearly, for this queue to be stable,
there is no room for any waste of this service capacity.
In other words, the scheduling policy must be throughput optimal.

The second requirement is exactly why we force all packets in a VOQ to go down the same fat path (i.e., stripe interval) 
through a Sprinklers switch, so as to guarantee that no packet reordering
can happen.  
It is 
easy to verify
that for packet order to be preserved within every stripe (and hence within every VOQ),
it suffices to guarantee that packets in every stripe go into their destination output port from
consecutive intermediate ports in consecutive time slots (i.e., continuously).  However, it is much harder for a stripe of $2^k$ packets to arrive at an output port continuously (than to leave an input port),
because these packets are physically located at a single input port when they leave the input port, but 
across $2^k$ different intermediate
ports right before they leave for the output port.  

After exploring the entire space of stripe-by-stripe scheduling policies,
we ended up adopting the same stripe scheduling policy, namely Largest Stripe First (LSF),
at both input ports and intermediate ports, but not for the same set of reasons.
At input ports, LSF is used because it is throughput optimal.
At intermediate ports,
LSF is also used because it seems to be the only policy that makes every stripe of packets arrive at their destination output port
continuously without incurring significant internal communication costs between the input and intermediate ports.
In Sec.~\ref{sec: lsf scheduling},
we will describe the LSF policies at the input and intermediate ports.



\subsection{Generating Stripe Intervals}
\label{sec:stripe generation}


As we have already explained, the random permutation and the variable dyadic stripe sizing techniques are used to
generate the stripe intervals for the $N$ VOQs originated from a single input port, with the objective of 
balancing the traffic coming out of this input port very evenly across all $N$ intermediate ports.
In Sec.~\ref{subsec: size determination}, 
we will specify this interval generating process precisely and provide detailed rationales for it.  
In Sec.~\ref{sec:stripe size eq}, we will 
explain why and how the interval generating processes at $N$ input ports should be carefully coordinated.

\subsubsection{Stripe interval generation at a single input port}
\label{subsec: size determination}

Suppose we fix an input port and number the $N$ VOQs originated from it as $1, 2, \cdots, N$, respectively.  
VOQ $i$ is first mapped to a distinct {\it primary intermediate port}
$\sigma(i)$, where $\sigma$ is a permutation chosen uniformly randomly from the set of all permutations on the set $\{1, 2, \cdots, N\}$.  
Then the stripe interval for a VOQ $i$ whose primary intermediate port is $\sigma(i)$ and
whose interval size is $n$, which must be a power of 2 as explained earlier, is simply the unique dyadic
interval of size $n$ that contains $\sigma(i)$.  A dyadic interval is one resulting from dividing the whole port number range 
$(0, N] \equiv \{1, 2, \cdots, N\}$ evenly by a power of 2, which takes the form
$(2^{k_0} m, 2^{k_0} (m + 1)]$ where
$k_0$ and $m$ are nonnegative integers.

\begin{figure}[htb]
  \centering
    \includegraphics[width=0.48\textwidth]{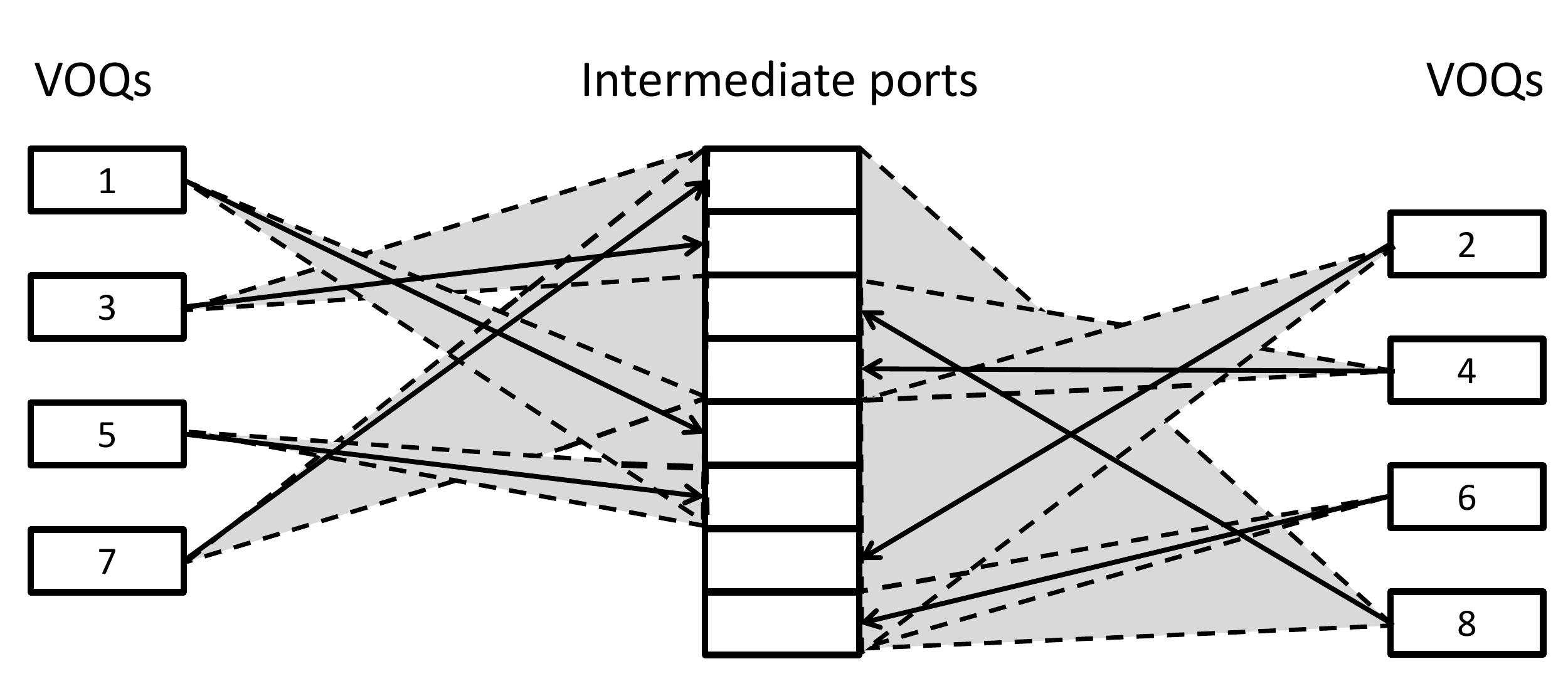}
    \caption{An example of VOQs being mapped to stripe intervals.}
\label{fig:stripe-interval-mapping}
\end{figure} 

In Fig.~\ref{fig:stripe-interval-mapping}, we show an example of mapping 8 VOQs at a single input port to their stripe intervals
in an 8 by 8 switch.  The middle column stands for the 8 intermediate ports and the columns
on the left and on the right correspond to VOQs 1, 3, 5, 7 and VOQs 2, 4, 6, 8, respectively.  Mapping a VOQ to 
its primary intermediate port is indicated by a solid arrow connecting the two.  Its stripe interval is indicated by 
the ``fanout" shadow that contains the solid line and is bounded at both ends by two dashed lines.  For example, 
VOQ 7 (bottom one on the left) is mapped to the primary intermediate 
port 1 and its stripe size is 4.  Therefore, its stripe interval is $(0, 4]$, the size-4 dyadic interval that contains intermediate
port 1.


\subsubsection{Stripe Size Determination}
\label{sec:stripe size eq}

As mentioned
in the ``water pressure equalization" analogy, for a VOQ, the size of its stripe interval is 
roughly proportional to the (current) traffic rate of the VOQ.  More precisely, the stripe interval size for
a VOQ of traffic rate $r$, is determined by the following function:
\begin{align}
  \label{eqn:stripe size rule}
  F(r) = \min\left\{N, 2^{\ceil{\log_2 (rN^2)}}
  \right\}.
\end{align}
The stripe size determination rule~(\ref{eqn:stripe size rule}) 
tries to bring the amount of traffic each intermediate port within the stripe interval receives from 
this VOQ below $1/N^2$ while requiring that the stripe size 
be a power of 2. 
However, if the rate $r$ is very high, say $>\frac{1}{N}$, then the stripe size $F(r)$ is simply $N$.

The initial sizing of the $N^2$ stripe intervals may be set based on historical switch-wide traffic matrix
information, or to some default values.  
Afterwards,
the size of a stripe interval will be adjusted based on the measured rate of the corresponding VOQ.  
To prevent the size of a stripe from ``thrashing" between $2^k$ and $2^{k+1}$, we can delay the 
halving and doubling of the stripe size.  Although our later mathematical derivations assume the perfect
adherence to the above stripe size determination rule, the change in the provable load-balancing guarantees
is in fact negligible when a small number of stripe sizes are a bit too small (or too large) for the respective rates of 
their corresponding VOQs.






\subsubsection{Coordination among all $N^2$ stripe intervals}
\label{sec:OLS}


Like an input port, an output port is connected to each intermediate port also exactly once every $N$
time slots, and hence needs to near-perfectly balance the traffic load coming into it.  That is, roughly $\frac{1}{N}$ of 
that load should come from each intermediate port.  In this section, we show how to achieve such load balancing at
an output port, by a careful coordination among the (stripe-interval-generating) permutations $\sigma_i$, $i = 1, 2, \cdots, N$.

Now consider an arbitrary output port $j$.  There are precisely $N$ VOQs destined for it, one originated from 
each input port.  Intuitively, they should ideally be mapped (permuted) to $N$ distinct primary intermediate ports 
-- for the same reason why the $N$ VOQs originated from an
input port are mapped (permuted) to $N$ distinct primary intermediate ports --
by the permutations $\sigma_1$, $\sigma_2$, $\cdots$, $\sigma_N$, respectively.  We will show this property 
holds for every output port $j$ if and only if the matrix representation of these permutations is an
Orthogonal Latin Square (OLS).

Consider an 
$N \times N$
matrix $A = (a_{ij})$, $i, j = 1, 2, \cdots, N$,  
where $a_{ij} = \sigma_i(j)$.
Consider the $N$ VOQs originated at input port $i$, one destined for each output port.  
We simply number 
each VOQ by its corresponding output port.
Clearly row $i$ of the matrix $A$ is the primary intermediate 
ports of these $N$ VOQs.   Now consider the above-mentioned $N$ VOQs destined for output port $j$, one originated from
each input port.   It is easy 
see that the $j_{th}$ column of matrix $A$, namely 
$\sigma_1(j)$, $\sigma_2(j)$, $\cdots$, $\sigma_N(j)$, is precisely the primary intermediate ports to which these
$N$ VOQs are mapped.  As explained earlier, we would like these numbers also to be distinct, i.e., a permutation of the set $\{1, 2, \cdots, N\}$.
Therefore, every row or column of the matrix $A$ must be a permutation of $\{1, 2, \cdots, N\}$.  Such a matrix is called an 
OLS in the combinatorics literature~\cite{colbourn1996crc}.


Our worst-case large deviation analysis in Sec.~\ref{sec:stability} requires the $N$ VOQs at the same input port or 
destined to the same output port
select their primary intermediate ports according to a uniform random permutation.
Mathematically, we only require the marginal distribution of the permutation represented by each row or column 
of the 
OLS
$A$ to be uniform.  
The use of the word ``marginal" here emphasizes that
we do not assume any dependence structure, or the lack thereof, among the $N$ random permutations 
represented by the $N$ rows and among those represented by the $N$ columns.  We refer to such an OLS as being weakly uniform random, to 
distinguish it from an OLS sampled uniformly randomly from the space of all
OLS' over the alphabet set $\{1, 2, \cdots, N\}$, which we refer to as being strongly uniform random.  
This distinction is extremely important for us, since  
whether there exists a polynomial time randomized algorithm for generating an OLS that is approximately strongly
uniform random has been an open problem in theoretical computer science and combinatorics 
for several decades~\cite{drizen2012generating,jacobson1996generating}.
A weakly uniform random OLS,
on the other hand, can be generated in $O(N \log N)$ time, shown as following.

We first generate two 
uniform random permutations $\sigma^{(R)}$
and $\sigma^{(C)}$ over the set $\{1, 2, \cdots, N\}$ that are mutually independent,
using a straightforward randomized algorithm~\cite{durstenfeld1964algorithm}.  
This process, which involves generating
$\log_2 N!$ = $O(N \log N)$ random bits needed to ``index" 
$\sigma^{(R)}$ and $\sigma^{(C)}$ each,  has $O(N \log N)$ complexity in total.  Then each matrix 
element $a(i, j)$ is simply set to $(\sigma^{(R)}(i) + \sigma^{(C)}(j) \mod N) + 1$.   It is not hard 
to verify that each row or column is a uniform random permutation.


\subsection{Largest Stripe First Policy}
\label{sec: lsf scheduling}



In this section, we describe our stripe scheduling policy called Largest Stripe First (LSF), which is used at both
input and intermediate ports of a Sprinklers switch.  LSF can be implemented in a straightforward
manner, at both input and intermediate ports, using $N (\log_2 N + 1)$ FIFO queues (a data structure concept).
Using an $N \times (\log_2 N + 1)$ 2D-bitmap to indicate the status of each queue (0 for empty, 1 for nonempty),
the switch can identify the rightmost bit set in each row of the bitmap in constant time,
which is used by LSF to identify the largest stripe to serve at each port.

We first provide a brief description of the periodic sequences of connections executed at both switching fabrics shown in Fig.~\ref{fig:load balance}.
The first switching fabric executes a periodic ``increasing'' sequence, that is, at any time slot $t$, each input port $i$ is connected
to the intermediate port $((i + t) \mod N) + 1$.  
The second switching fabric, on the other hand, will execute a periodic ``decreasing" sequence, that is, 
at any time slot $t$, each intermediate port $\ell$ is connected to the output port $((\ell - t) \mod N)+1$.  

In the rest of the paper, we make the following standard homogeneity assumption about a Sprinklers switch.  
Every input, intermediate, or output port
operates at the same speed.  That is, each can process and transmit exactly one packet per time slot.  We refer to this speed
as $1$.  Every connection made in a switching fabric also has speed of $1$ (i.e., one packet can be switched per time slot).
Since $N$ connections are made by a switching fabric at any time slot, up to $N$ packets can be switched during the time 
slot.





\subsubsection{Stripe Scheduling at Input Ports}
\label{sec:scheduling input}


\begin{algorithm}
\caption{LSF policy on ``Who is next?"}
\label{alg:select next VOQ to serve}
  \begin{algorithmic}[1]
  \State $l = (i + t \mod N) + 1$; 
    \If {No stripe is being served at time $t$}
      \State Let $S$ be the set of stripes with interval $(l-1, *]$;
      \If {$S$ is not empty}
        \State Start serving the largest stripe in $S$;
      \EndIf
    \EndIf
  \end{algorithmic}
\end{algorithm}


The above pseudocode describes the Largest stripe first (LSF) policy used at input port $i$ to make
a decision as to, among multiple competing stripes, which one to serve next.   
Note that by the stripe-by-stripe nature of the scheduler, it is asked to make such a policy decision only 
when it completely finishes serving a stripe.
The policy is simply to pick among the set of stripes
whose dyadic
interval starts at intermediate port $\ell$ -- provided that the set is nonempty --  the largest one (FCFS for tie-breaking) to start serving immediately.  
The LSF policy is clearly throughput optimal because it is work-conserving in the sense whenever a connection is made between input port $i$ and intermediate port $\ell$, a packet will be served
if there is at least one stripe at the input port 
whose interval contains $\ell$.  
Using terms from queueing theory~\cite{kleinrock1975queueing}
we say this queue is served by a work-conserving server with service rate $1/N$.





A stripe service schedule, which is the outcome of a scheduling policy acting on the stripe arrival process, 
is represented by a schedule grid, as shown in Fig.~\ref{fig: instance}.
There are $N$ rows in the grid, each corresponding
to an intermediate port.  Each tiny square in the column represents a time slot. 
The shade in the square represents a packet scheduled to be transmitted in that slot and
different shade patterns mean different VOQs.  Hence each ``thin vertical bar" of squares with the same shade 
represents a stripe in the schedule.
The time is progressing
on the grid from right to left in the sense that packets put inside the rightmost column 
will be served in the first cycle ($N$ time slots), the next column to the left in the second
cycle, and so on.  Within each column, time is progressing from up to down in the sense
that packet put inside the uppermost cell will be switched to intermediate port 1 say at
time slot $t_0$, packet put inside the cell immediately below it will be switched to intermediate
port 2 at time slot $t_0+1$, and so on.  Note this up-down sequence is consistent with the above-mentioned
periodic connection pattern between the input port and the intermediate ports.
Therefore, the scheduler is ``peeling" the grid from right to left and from up to down.

For example, Fig.~\ref{fig: instance}
corresponds to the statuses of the scheduling grid at time $t_0$ and $t_0 + 8$, 
where LSF is the scheduling policy.  We can see that
the rightmost
column 
in the left part of Fig.~\ref{fig: instance}, representing a single stripe of size $8$, 
is served by the first switching fabric in the meantime, 
and hence disappears in the right part of the figure.  
The above-mentioned working conservation nature of the LSF server for any input port is
also clearly illustrated in Fig.~\ref{fig: instance}:  
There is no ``hole" in any row.  

\begin{figure}[htb]
  \centering
    \includegraphics[width=0.48\textwidth]{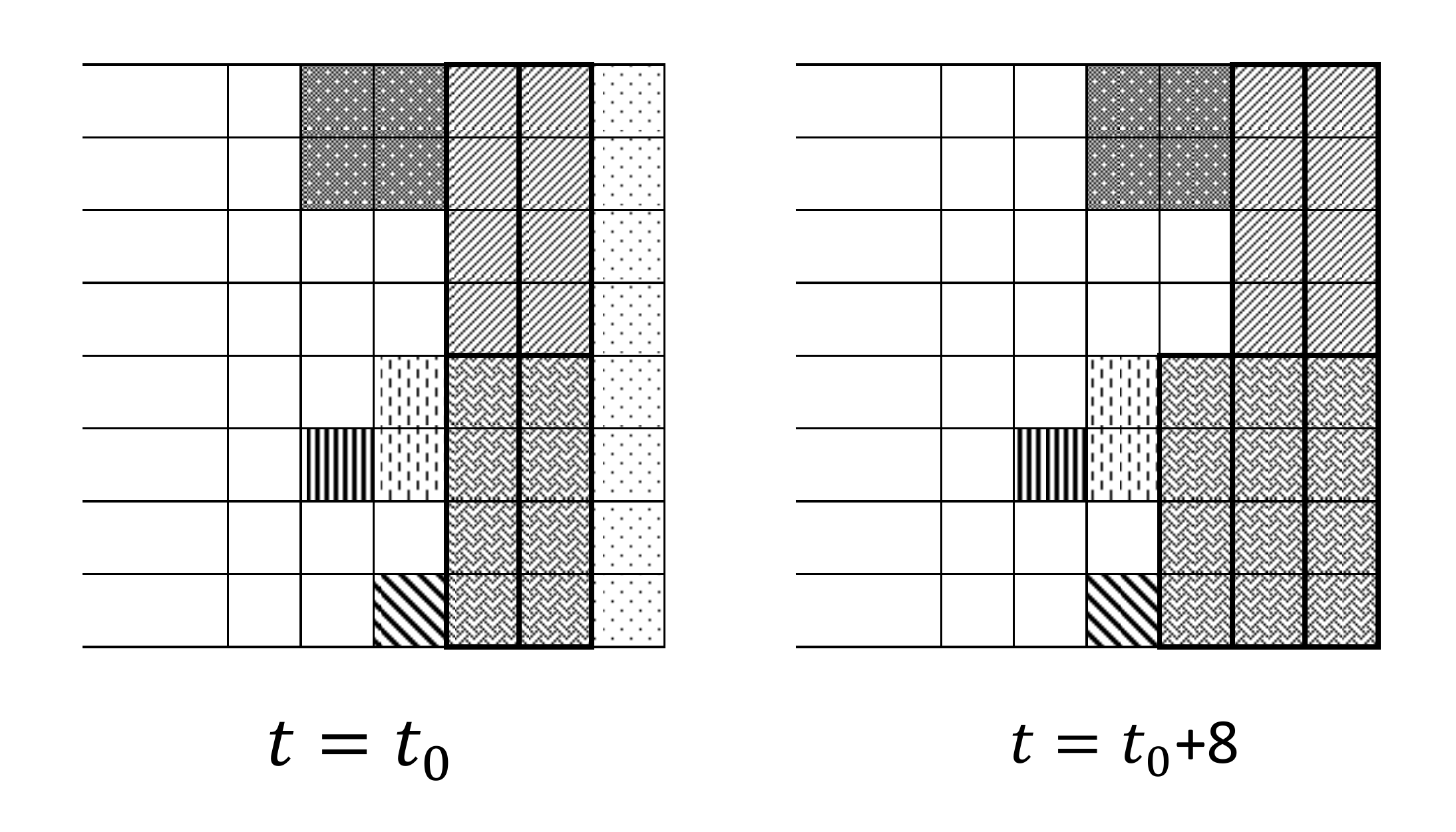}
    \caption{Statuses of the stripe scheduling at time $t_0$ (left) and $t_0+8$ (right) respectively.}
\label{fig: instance}
\end{figure}	 






While the scheduler always peels the grid from right to left and from up to down, 
the scheduler may insert a stripe ahead of other stripes smaller than it in size.  
In this case, the planned service time of all cells and bars to the left of the inserted vertical bar 
are shifted to the left by 1 column, meaning that their planned service time will 
be delayed by $N$ time slots.
For example, we 
can see from comparing the two parts of Fig.~\ref{fig: instance}
that a stripe of size 4 is inserted in the lower part of the 
3rd column in the right part of Fig.~\ref{fig: instance}, 
after two other stripes of size 4, but before some other stripes of smaller sizes.





\subsubsection{Implementation of LSF at an Input Port}
\label{subsec:data strucutre lsf}

In this section, we describe data structures and algorithms for implementing the Longest 
Stripe First scheduling policy at an input port.  Again we use input port 1 as an illustrative example.
The LSF policy at input port 1 can be implemented using $N (\log_2 N+1)$ 
FIFO queues as shown in Fig.~\ref{fig:data structure}.  Conceptually, these FIFO queues are arranged into an array with 
$N$ rows and $\log_2 N + 1$ columns.  Row $\ell$ corresponds to FIFO queues that buffer packets 
to be sent to intermediate port $\ell$.  Each column of FIFO queues is used to buffer stripes of a certain size.  
The last column is for stripes of size $N$, the second last for stripes of size $N/2$, and so on.  
The very first column is for stripes of size $1$.  Therefore, the FIFO on the $\ell$th row and $k$th column
is to queue packets going to intermediate port $\ell$ and from stripes of size
$2^{k-1}$.

\begin{figure}[htb]
  \centering
    \includegraphics[width=0.48\textwidth]{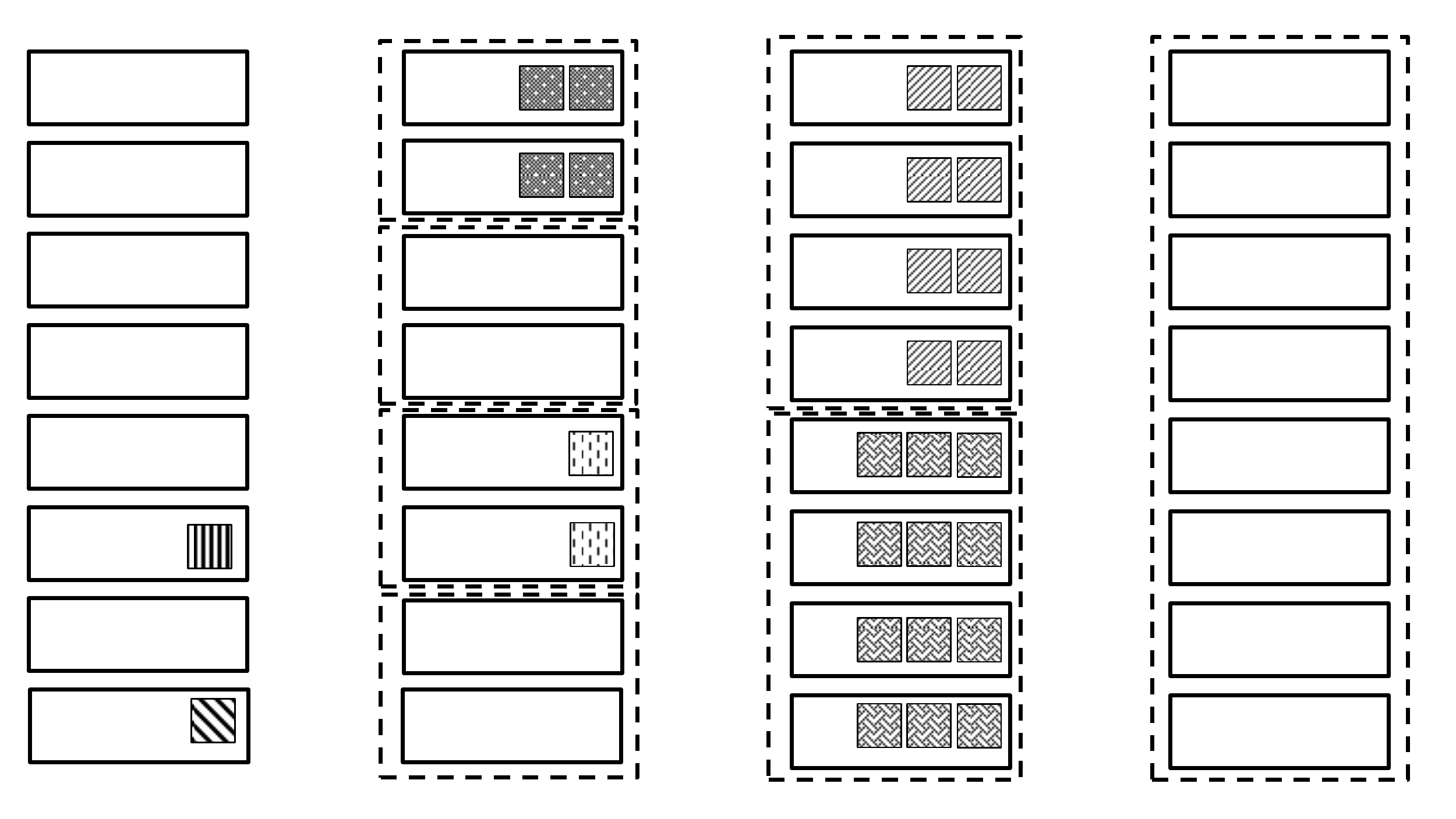}
    \caption{Data structure to implement LSF.}
\label{fig:data structure}
\end{figure}


The implementation of LSF with this data structure is straightforward.  Each VOQ will have a ready queue 
to buffer packets that have not filled up a stripe yet.  Once a stripe is full, depending on the size of the stripe, 
it will be ``plastered" into the corresponding interval in the corresponding column.  At each column,
stripes sent into the same interval will be queued in the FIFO order.  For example, the contents of 
these FIFO queues at time $t_0+8$, that correspond to the scheduling grid shown as the right part of Fig.~\ref{fig: instance}, 
are shown in Fig.~\ref{fig:data structure}.

During each time slot $t$, the scheduler either starts or is in the middle of serving the longest stripe that contain the intermediate
port $t \mod N$.  However, the scheduler 
does not need to be stripe-aware.  At each time $t$, it only needs to scan the $(t \mod N)th$ row 
from right to left and serve the packet at the head of the first nonempty queue.  If for each queue, we use a bit to encode whether it
is empty, our problem boils down to looking for the ``first one from right" in a bitmap $\log_2 N + 1$ bits long.  This kind of operation
can be done in one to a few clock cycles on modern processors or ASIC chip.  Overall $N$ such bitmaps are needed, one for each 
row (intermediate port).

Careful readers may note that it is unnecessary to maintain $N \log_2 N$ FIFO queues.  Rather, we need only one FIFO queue for all 
stripes of size $N$, two for all stripes of size $N/2$, -- namely one for the interval (0, N/2] and the other for $(N/2, N]$), -- four for all stripes of size
$N/4$, and so on.  The total number of FIFO queues we need is indeed $1 + 2 + 4 + \cdots + N/2 + N = 2N-1$.  
For example, in the above figure, we need only 15 queues: one FIFO queue for all stripes of size 8, two for all stripes of size 4, 
four for all stripes of size 2, and 8 for all stripes of size 1.  We will indeed adopt such a simplified implementation for the 
input port.  However, we still present the data structure in this ``verbose" manner because this data structure will be modified for use at
the intermediate ports, and there such a simplifying collapse is no longer possible, as will become clear next.


\subsubsection{Stripe Scheduling at Intermediate Ports}
\label{sec:scheduling intermediate}

In Sprinklers, an intermediate port also adopts the LSF policy for two reasons.  First, throughput-optimality is needed also at an intermediate port for the same reason stated above.  Second, given the distributed nature
in which any scheduling policy at an intermediate port is implemented, LSF 
appears to be the easiest and least costly to implement.

The schedule grids at intermediate ports take a slightly different form.
Every output port $j$ is associated with a schedule grid also consisting of $N$ rows. 
Row $i$ of the grid corresponds to the tentative schedule in which packets destined for output port 
$j$ at intermediate port $i$ will follow.  This schedule grid is really a virtual one, of which the $N$ rows are physically distributed across $N$ different intermediate ports respectively.  All stripes heading to output 
$j$ show up on this grid.  The LSF policy can be defined with respect to this virtual grid in 
almost the same way as in the previous case.  

The data structure for implementing LSF at intermediate ports is the same as that at input ports, except 
components of each instance are distributed across all $N$ intermediate ports, thus requiring some 
coordination.  This coordination however requires only that, for each packet switched over the first switching fabric, 
the input port inform the intermediate port of the size of the stripe to which the packet belongs.  This 
information can be encoded in just $\log_2 \log_2 N$ bits, which is a tiny number (e.g., $=4$ bits 
when $N = 4096$), and be included in the internal-use header
of every packet transmitted across the first switching fabric.

\section{Stability analysis}
\label{sec:stability}

As explained in Sec.~\ref{sec: lsf scheduling}, a queueing process can be defined on the set of packets
at input port $i$ that need to be switched to intermediate port $\ell$.
This (single) queue is served by a work-conserving server with a service rate $1/N$ under the LSF scheduling policy.
It has been shown
in \cite{dai2000throughput} that such a queue is stable
as long as the long-term average arrival rate is less than $1/N$.
This entire section is devoted to proving a single mathematical result, that is, the arrival rate to
this queue is less than $1/N$ with overwhelming 
probability, under all admissible traffic workloads to the input port $i$.

Between input ports $i = 1, 2, ..., N$ and intermediate ports $\ell = 1, 2, ..., N$, 
there are altogether $N^2$ such queues to analyze.  It is not hard  to verify, however, that 
the joint probability distributions of the random variables involved in -- and hence the results from --
their analyses are identical.
We will present only the case for $i= 1$ and $\ell = 1$.  
Thus, the queue in the rest of the section specifically refers to the one serving packets from input port
1 to intermediate port 1.
With this understanding, we will drop subscripts
$i$ and $\ell$ in the sequel for notational convenience.

There are another $N^2$ queues to analyze, namely, the queues of packets that need to be
transmitted from intermediate ports $\ell = 1, 2, ..., N$ to output ports $j = 1, 2, ..., N$.  
Note each such queue is also served
by a work-conserving
server with service rate $1/N$, because the LSF scheduling policy is also used at the intermediate ports, 
as explained in Sec.~\ref{sec:scheduling intermediate}.
Therefore, we again need only to prove that the arrival rate to each such queue is statistically less than 
$1/N$.  However, again 
the joint probability distributions of the random variables involved in -- and hence the results from --
their analyses are identical to that of the $N^2$ queues between the input ports and the intermediate ports,
this time due to the statistical rotational symmetry of the above-mentioned weakly random OLS based on 
which the stripe intervals are generated.  Therefore, no separate analysis is necessary for these 
$N^2$ queues.




\subsection{Formal statement of the result}


In this section, we precisely state the mathematical result. 
Let the $N$ VOQs at input port $1$ be numbered $1$, $2$, $\cdots$, $N$.  Let their arrival rates 
be $r_1$, $r_2$, $\cdots$, and $r_N$, respectively, which we write also in the vector form $\vec{r}$.
Let $f_i$ be the stripe size of VOQ $i$, i.e., $f_i := F(r_i)$ for $i = 1, 2, \cdots, N$, where $F(\cdot)$ 
is defined in Equation~\ref{eqn:stripe size rule}.  
Let $s_i$ be the load-per-share of VOQ $i$, i.e., $s_i := \frac{r_i}{F(r_i)}$ for $i = 1, 2, \cdots, N$.
Let $|\vec{r}|$ be the sum of the arrival rates, i.e., $|\vec{r}| := r_1 + \cdots + r_N$.

Let $X(\vec{r}, \sigma)$ be the total traffic arrival rate to the queue when the $N$ input VOQs have
arrival rates given by $\vec{r}$ and the uniform random permutation used to map the $N$ VOQs to their
respective primary intermediate ports is $\sigma$.  All $N$ VOQs could 
contribute some traffic to this queue (of packets that need to be switched to
intermediate port $1$) and $X(\vec{r}, \sigma)$ is simply the sum of all these contributions.
Clearly $X(\vec{r}, \sigma)$ is a random variable whose value depends on the VOQ rates $\vec{r}$ 
and how $\sigma$ shuffles these VOQs around.  

Consider the VOQ that selects intermediate port $\ell$ as its primary intermediate port.
By the definition of $\sigma$, the index of this VOQ is $\sigma^{-1}(\ell)$.
It is not hard to verify that this VOQ contributes one of its load shares to this queue,
in the amount of $s_{\sigma^{-1}(\ell)}$, if and only if its stripe size is at least $\ell$.
Denote this contribution as $X_\ell(\vec{r}, \sigma)$.  we have
$X_\ell(\vec{r}, \sigma) = s_{\sigma^{-1}(\ell)} 
\mathbbm{1}_{\{f_{\sigma^{-1}(\ell)} \ge \ell\}}$.  Then:
\begin{align}
	\label{eqn:decomposition of Y using x}
	X(\vec{r}, \sigma) = \sum_{\ell =1 }^N X_\ell(\vec{r}, \sigma).
\end{align}

Note all the randomness of $X_\ell(\vec{r}, \sigma)$ and hence $X(\vec{r}, \sigma)$ comes
from $\sigma$, as $\vec{r}$ is a set of constant parameters.  The expectations of the functions of $X_\ell(\vec{r}, \sigma)$ and $X(\vec{r}, \sigma)$, such as their moment generating functions (MGF), are taken over $\sigma$.  With this understanding,
we will drop $\sigma$ from $X_\ell(\vec{r}, \sigma)$ and $X(\vec{r}, \sigma)$ and simply write them as $X_\ell(\vec{r})$ and $X(\vec{r})$ in the sequel.


As explained earlier, the sole objective of this section is to prove that $\mathbb{P}(X(\vec{r}) \ge 1/N)$, the 
probability that the total arrival rate to this queue exceeds the service rate, is either 0 or extremely small.  
This probability, however, generally depends not only on
$|\vec{r}|$, the total traffic load on input port $1$, but also on how this total splits
up (i.e., the actual rate vector $\vec{r}$).  

Our result is composed of two theorems, namely,
Theorems 1 and 2.  Theorem 1 states that when this total load is no more than roughly $2/3$, then 
this probability is strictly $0$, regardless of how it is split up.  
When the total traffic load $|\vec{r}|$ exceeds that amount, however, this 
probability could become positive, and how large this probability is could hinge heavily on 
the actual $\vec{r}$ values.  Theorem 2, together with the standard Chernoff bounding technique 
preceding it, provides a bound on the maximum value of 
this probability when the total load is no more than some constant $\rho$, but can be split up in arbitrary ways.
More precisely, Theorem 2 helps  
establish an upper bound on 
$\sup_{\vec{r}\in V(\rho)} \mathbb{P}(X(\vec{r}) \ge 1/N)$, where
$V(\rho) := \{ \vec{r} \in \mathbb{R}^N_+: |\vec{r}| \le \rho \}$.
However, since $\mathbb{P}(X(\vec{r}) \ge 1/N)$, as a function of $\vec{r}$, is nondecreasing, this
is equivalent to bounding $\sup_{\vec{r}\in U(\rho)} \mathbb{P}(X(\vec{r}) \ge 1/N)$, where
$U(\rho) := \{ \vec{r} \in \mathbb{R}^N_+: |\vec{r}| = \rho \}$.  The same 
argument applies to $\mathbb{E}[\exp(\theta X(\vec{r}))]$, the MGF of
$X(\vec{r})$.  Therefore, we will use $U(\rho)$ instead of $V(\rho)$ in both cases in the sequel.

\begin{theo}
	\label{theo:threshold of rho}
	$X(\vec{r}) < 1/N$ with probability $1$ if $|\vec{r}| < \frac{2}{3} + \frac{1}{3N^2}$.
\end{theo}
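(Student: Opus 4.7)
My plan is to prove the stronger deterministic statement: for every permutation $\sigma$ and every rate vector $\vec{r}$ with $|\vec{r}| < \tfrac{2}{3} + \tfrac{1}{3N^2}$, $X(\vec{r}) < 1/N$. Since all randomness in $X(\vec{r})$ comes from $\sigma$, this immediately yields the probability-one claim. I argue by contrapositive: I solve the worst-case question ``what is the smallest $|\vec{r}|$ consistent with $X(\vec{r}) \geq 1/N$?'' and show the answer is exactly $\tfrac{2}{3} + \tfrac{1}{3N^2}$.

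First, I would extract the combinatorial structure. Rewriting $X(\vec{r}) = \sum_i s_i \mathbbm{1}_{\{\sigma(i) \leq f_i\}}$, a VOQ contributes iff $\sigma(i) \in \{1, \ldots, f_i\}$; let $J$ denote the set of contributing VOQs. Partition $J$ by dyadic stripe size into $J_k = \{i \in J : f_i = 2^k\}$, with $n_k = |J_k|$ and $C_k := \sum_{i \in J_k} s_i$. Since $\sigma$ is a permutation and every VOQ in $J_0 \cup \cdots \cup J_k$ occupies a distinct primary port in $\{1, \ldots, 2^k\}$, we obtain $\sum_{k' \leq k} n_{k'} \leq 2^k$ for every $k$. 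From the stripe size rule (\ref{eqn:stripe size rule}), any VOQ with $r_i \leq 1/N$ satisfies $s_i = r_i/f_i \leq 1/N^2$, so $C_k \leq n_k/N^2$ for $k < \log_2 N$; combining with the permutation bound gives the cumulative constraint $\sum_{k' \leq k} C_{k'} \leq 2^k/N^2$ for $k < \log_2 N$.

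Next, since $r_i = f_i s_i$ for every VOQ (regular or large), $|\vec{r}| \geq \sum_{i \in J} r_i = \sum_k 2^k C_k$. If $X(\vec{r}) = \sum_k C_k \geq 1/N$, then $|\vec{r}|$ is lower-bounded by the value of the linear program: minimize $\sum_k 2^k C_k$ over nonnegative $C_k$ subject to $\sum_k C_k \geq 1/N$ and the cumulative inequalities. Because the cost coefficients $2^k$ are strictly increasing, a direct greedy argument---fill the cumulative caps from $k=0$ upward---places the optimum at $C_0 = 1/N^2$ and $C_k = 2^{k-1}/N^2$ for $1 \leq k \leq \log_2 N$, saturating every cumulative constraint and giving objective value
\begin{equation*}
\frac{1}{N^2}\left(1 + \sum_{k=1}^{\log_2 N} 2^{2k-1}\right) = \frac{1}{N^2}\left(1 + \frac{2(N^2-1)}{3}\right) = \frac{2}{3} + \frac{1}{3N^2},
\end{equation*}
matching the threshold and completing the proof by contrapositive.

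The main obstacle is the handling of ``very large'' VOQs with $r_i > 1/N$, whose stripe size is clipped to $N$ and whose $s_i = r_i/N$ can exceed $1/N^2$, so the per-VOQ bound used for $k < \log_2 N$ fails at $k = \log_2 N$. Fortunately, in this highest slot the rate-to-contribution ratio is $N$ for regular and large VOQs alike (since $r_i = N s_i$ either way), so omitting the upper bound on $C_K$ in the LP does not lower the optimum---the greedy has already exhausted every smaller-$k$ cap before paying the maximal per-unit cost $N$ at $C_K$. Hence including large VOQs cannot push the required $|\vec{r}|$ below $\tfrac{2}{3} + \tfrac{1}{3N^2}$, and the deterministic bound stands.
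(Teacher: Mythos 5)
Your proof is correct and yields exactly the paper's threshold $\tfrac{2}{3} + \tfrac{1}{3N^2}$, but it takes a noticeably different route from the paper's. The paper first establishes the structural Lemma~\ref{lemm:fill water}---that in any minimizer of $|\vec{r}|$ subject to $X(\vec{r}) \ge 1/N$, each contributing VOQ assigned to primary port $\ell$ has stripe size exactly $2^{\lceil \log_2 \ell\rceil}$---via a two-case contradiction argument, and then explicitly constructs the water-filling vector $\vec{r}^*$. You instead aggregate contributing VOQs by stripe-size level into the masses $C_k$ and extract the cumulative constraints $\sum_{k' \le k} C_{k'} \le 2^k/N^2$ directly from the injectivity of $\sigma$ restricted to $\{1, \dots, 2^k\}$ plus the per-share bound $s_i \le 1/N^2$ for $f_i < N$, reducing the whole question to a one-dimensional-per-level LP. This sidesteps Lemma~\ref{lemm:fill water} entirely, and it also handles the clipped $f_i = N$ VOQs cleanly: there is no cap on $C_K$, but since the per-unit rate cost $2^K = N$ at that level is the largest, the greedy optimum is unaffected. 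The one step you state rather than prove is greedy optimality of the LP; if you want it airtight, an Abel summation closes it in one line: with $S_k := \sum_{k' \le k} C_{k'}$,
\begin{equation*}
\sum_{k=0}^{K} 2^k C_k \;=\; 2^K S_K - \sum_{k=0}^{K-1} 2^k S_k \;\ge\; 2^K \cdot \frac{1}{N} - \sum_{k=0}^{K-1} \frac{4^k}{N^2} \;=\; 1 - \frac{N^2-1}{3N^2} \;=\; \frac{2}{3} + \frac{1}{3N^2},
\end{equation*}
substituting $S_K \ge 1/N$ and $S_k \le 2^k/N^2$ for $k < K$. Both proofs rest on the same water-filling intuition, but yours is more streamlined and its deterministic (per-$\sigma$) framing makes the ``with probability 1'' conclusion transparent.
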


Let $r_\ell$ be the arrival rate of the VOQ that selects intermediate port $\ell$
as primary port. The corresponding stripe size and load-per-share of $r_\ell$ is given by
$f_\ell$ and $s_\ell$. Given the vector $\vec{r} = (r_1, \cdots, r_N)$, 
the arrival rate the queue 
(at input port 1, serving packets going to intermediate port 1) receives
is $X(\vec{r}) = \sum_{\ell = 1}^N s_\ell \mathbbm{1}_{\{f_\ell \ge \ell\}}$. We prove
Theorem \ref{theo:threshold of rho} by studying the following problem.
\begin{align}
	\label{prob:threshod rho}
	\min_{r}:& \sum_{\ell = 1}^N r_\ell \\
	\nonumber
	s.t. & \sum_{\ell = 1}^N s_\ell \mathbbm{1}_{\{f_\ell \ge \ell\}} \ge 1/N
\end{align}
\begin{lemm}
\label{lemm:fill water}
	Suppose $r^* = (r_1^*, \cdots, r_N^*)$ is an optimal solution to Problem 
	(\ref{prob:threshod rho}). Then $f^*_\ell = 2^{\ceil{\log_2 \ell}}$ if
	$r^*_\ell > 0$, for $\ell = 1, \cdots, N$.
\end{lemm}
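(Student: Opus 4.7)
The plan is an exchange argument. Assume $r^* = (r_1^*, \ldots, r_N^*)$ is optimal and some VOQ $\ell$ has $r_\ell^* > 0$ with $f_\ell^* \neq 2^{\ceil{\log_2 \ell}}$; I will construct a feasible perturbation that strictly reduces $\sum_j r_j$, producing a contradiction. The direction $f_\ell^* < \ell$ is immediate: the indicator $\mathbbm{1}_{\{f_\ell \ge \ell\}}$ then vanishes, so zeroing out $r_\ell^*$ preserves feasibility and strictly lowers cost. It remains to rule out $f_\ell^* > 2^{\ceil{\log_2 \ell}}$, so $f_\ell^* \ge 2 \cdot 2^{\ceil{\log_2 \ell}}$.

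My candidate replacement is $r_\ell' := 2^{\ceil{\log_2 \ell}}/N^2$, which places $r_\ell$ at the top of the smallest dyadic bucket that still satisfies $f \ge \ell$, so $f_\ell' = 2^{\ceil{\log_2 \ell}}$ and $s_\ell' = 1/N^2$. Using the bucket description of $F$---$r \in (f/(2N^2),\, f/N^2]$ for $f < N$, and $r > 1/(2N)$ for $f = N$---one checks $r_\ell^* > r_\ell'$, so the cost strictly drops by $\Delta_{\mathrm{cost}} := r_\ell^* - 2^{\ceil{\log_2 \ell}}/N^2 > 0$. The contribution change is $s_\ell' - s_\ell^* = 1/N^2 - s_\ell^*$, which is nonnegative whenever $s_\ell^* \le 1/N^2$. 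The latter holds except in the single regime $f_\ell^* = N$ and $r_\ell^* > 1/N$; outside this regime the modified solution is immediately feasible and strictly cheaper, contradicting optimality.

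The hard sub-case is $s_\ell^* > 1/N^2$, which forces $f_\ell^* = N$, $r_\ell^* > 1/N$, and (hence) $\ell \le N/2$. Here the local replacement loses $\Delta_{\mathrm{contrib}} := s_\ell^* - 1/N^2 = (Nr_\ell^* - 1)/N^2$ of contribution, which I absorb by upgrading other VOQs---either pushing their $r_{\ell'}$ upward within the current bucket, or activating previously-zero VOQs at their own min-$f$ top. The crucial estimate is that any such upgrade adds contribution at cost per unit at most $f_{\ell'}^* \le N$, so the total compensation cost is at most $N\Delta_{\mathrm{contrib}} = r_\ell^* - 1/N$. Combined with the bound $\Delta_{\mathrm{cost}} \ge r_\ell^* - 1/(2N)$ (using $2^{\ceil{\log_2 \ell}} \le N/2$), the net cost change is at most $2^{\ceil{\log_2 \ell}}/N^2 - 1/N \le -1/(2N) < 0$, a strict improvement. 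A routine check using the feasibility of $r^*$ confirms that enough compensation capacity exists in the other VOQs.

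The main obstacle is exactly this last sub-case: when $\ell$ is packed into a full-width stripe $f_\ell^* = N$ carrying more than $1/N$ units of traffic, its contribution-per-cost ratio is atypically large, so a purely local shrink of $r_\ell^*$ cannot by itself preserve feasibility, forcing the global compensation argument. The inequality that ultimately makes the savings dominate the compensation is $2^{\ceil{\log_2 \ell}} < N$, which is precisely the wastefulness hypothesis.
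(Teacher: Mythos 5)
Your proof follows the same exchange/contradiction structure as the paper's, including the identical case split: (i) $f_\ell^*$ too small (indicator vanishes, zero out $r_\ell^*$), (ii) $f_\ell^*$ too large with $s_\ell^* \le 1/N^2$ (snap $r_\ell^*$ down to the top of the correct dyadic bucket), and (iii) the hard sub-case $s_\ell^* > 1/N^2$, which forces $f_\ell^* = N$, $r_\ell^* > 1/N$, and $\ell \le N/2$. Cases (i) and (ii) are essentially word-for-word what the paper does.

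The gap is in sub-case (iii), in the claim that ``the total compensation cost is at most $N\Delta_{\mathrm{contrib}} = r_\ell^* - 1/N$.'' This treats the contribution you can buy from other VOQs as a continuously divisible commodity priced at marginal rate $f_{\ell'} \le N$ per unit. But the amount you can buy is not arbitrary: if no other VOQ $\ell'$ currently has $f_{\ell'}^* = N$ (that is, no $\ell' \ne \ell$ with $r_{\ell'}^* > 1/(2N)$), then the only way to add contribution is to push some VOQ $\ell'$ with $\ell' > N/2$ across the threshold $r_{\ell'} > 1/(2N)$. This activation has a \emph{fixed minimum cost} of roughly $1/(2N) - r_{\ell'}^*$, independent of how small $\Delta_{\mathrm{contrib}}$ is. Concretely, if $r_\ell^* = 1/N + \eta$ for tiny $\eta > 0$ and all other rates are at or below $1/(2N)$, then $N\Delta_{\mathrm{contrib}} = \eta$ is tiny, yet the cheapest compensating perturbation costs on the order of $1/(2N)$; your asserted bound ``compensation cost $\le N\Delta_{\mathrm{contrib}}$'' fails, and with it the displayed inequality ``net cost change $\le 2^{\ceil{\log_2\ell}}/N^2 - 1/N \le -1/(2N)$.'' (Relatedly, ``pushing $r_{\ell'}$ upward within the current bucket'' is not available if $\ell'$ sits at its bucket top, and ``activating previously-zero VOQs'' may not apply since no VOQ need be zero; you are really forced into the one concrete move the paper makes, namely pushing a VOQ $\ell' > N/2$ past $1/(2N)$.) The conclusion is still salvageable: the savings $\Delta_{\mathrm{cost}} = r_\ell^* - 2^{\ceil{\log_2\ell}}/N^2$ is \emph{strictly} greater than $1/(2N)$ because $r_\ell^* > 1/N$ strictly and $2^{\ceil{\log_2\ell}} \le N/2$, so by pushing $r_{N/2+1}$ to $1/(2N) + \epsilon$ with $\epsilon$ chosen smaller than the slack $r_\ell^* - 1/N$, the total rate still strictly decreases -- but the margin is only $\epsilon$-small, not $-1/(2N)$. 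This is exactly the delicate $\delta$/$\varepsilon$ bookkeeping that the paper's proof carries out explicitly by choosing the specific transfer to VOQ $N/2+1$; your proof asserts the estimate without establishing it.
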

\begin{proof}
	We prove the lemma by contradiction. 
	\begin{enumerate}
		\item Suppose there is $r^*_{\ell_0} > 0$ and 
		$f_{\ell_0} < 2^{\ceil{\log_2 \ell_0}}$, 
		then $r^*_{\ell_0}$ does not contribute any arrival rate to the queue. Thus, we
		can reduce $r^*_{\ell_0}$ to zero without changing the 
		total arrival rate to the queue. This contradicting the assumption that $r^*$ is an
		optimal solution to Problem (\ref{prob:threshod rho}).
		\item Suppose there is $r^*_{\ell_0} > 0$ 
		and $f_{\ell_0}^* > 2^{\ceil{\log_2 \ell_0}}$.
		According to stripe size rule (\ref{eqn:stripe size rule}), we know 
		$s^*_{\ell_0} > 1/(2N^2)$.
		\begin{enumerate}
			\item If $s^*_{\ell_0} \le 1/N^2$, we construct solution $r' = r^*$ except 
			we set $r'_{\ell_0} = 2^{\ceil{\log_2 \ell_0}} / N^2$. 
			$f^*_{\ell_0} > 2^{\ceil{\log_2 \ell_0}}$ implies 
			$f^*_{\ell_0}) \ge 2 \times 2^{\ceil{\log_2 \ell_0}}$.
			Further since $s^*_{\ell_0} > 1/(2N^2)$, we have 
			that $r'_{\ell_0} < r^*_{\ell_0}$. It is easy to see that
			$X(r') > X(r^*)$, thus $r'$ is also a feasible solution and has smaller 
			total arrival rate than $r^*$. This contradicts the assumption that $r^*$ 
			is an optimal solution.
			\item If $s^*_{\ell_0} > 1/N^2$, implying $f^*_{\ell_0} = N$, we suppose 
			$r^*_{\ell_0} = 1/N + N \varepsilon$. 
			Since $N = f^*_{\ell_0} > 2^{\ceil{\log_2 \ell_0}}$, we have $\ell_0 \le N/2$.
			Construct $r' = r^*$ except set 
			$r'_{\ell_0} = 2^{\ceil{\log_2 \ell_0}} / 2N^2 + \delta$ and
			$r'_{N/2 + 1} = r^*{N/2+1} + (1/(2N^2) + \varepsilon) N$ where 
			$0 < \delta \ll \varepsilon$. Then 
			$X(r') = X(r^*) + \delta / 2^{\ceil{\log_2 \ell_0}} > X(r^*)$ and 
			the total arrival rate of $r'$ is smaller than that of $r^*$. This contradicts the 
			assumption that $r^*$ is an optimal solution.
		\end{enumerate}
	\end{enumerate}
	To conclude, we establish the property in the lemma.
\end{proof}
With the stripe size property in Lemma \ref{lemm:fill water}, if $r^*$ is an optimal
solution to Problem (\ref{prob:threshod rho}), then
all $r^*_1$ goes to the queue, 1/2 of $r_2^*$ goes to the queue, 1/4 of $r_3^*$ and $r_4^*$ goes to 
the queue, and so on. To minimize the total arrival rate while have $X(r^*) \ge 1/N$, one
should fill up $r_\ell^*$ that is closer to the queue first. Thus, 
$r^*_1 = 1/N^2, r^*_2 = 2/N^2, \cdots, r^*_{N/4+1} = \cdots = r^*_{N/2} = N/2 \cdot 1/N^2$.
From these VOQs, the queue receives total arrival rate $N/2 \cdot 1/N^2 = 1/(2N)$.
To achieve $X(r^*) = 1/N$, we set $r^*_{N/2+1} = N \cdot 1/(2N) = 1/2$ 
and all other $r_\ell^* = 0$. This optimal $r^*$ results in total arrival rate 
$(1 + 2 + 4 + \cdots + N/4 \cdot N/2) / N^2 + 1/2 = 2/3 + 1/(3N^2)$.






We now move on to describe the second theorem.
By the standard Chernoff bounding technique, we have
\begin{align*}
	\mathbb{P}(X(\vec{r}) \ge 1/N) \le \inf_{\theta > 0}
	\exp(-\theta /N)\mathbb{E} 
	[\exp(\theta X(\vec{r}))].
\end{align*}


The above-mentioned worst-case probability can thus be upper-bounded as
\begin{align}
	\nonumber
	&\sup_{\vec{r} \in U(\rho)} \mathbb{P}(X(\vec{r}) \ge 1/N) \\
	\le& \nonumber
	\sup_{\vec{r} \in U(\rho)} \inf_{\theta > 0} \exp(-\theta/N)
	\mathbb{E}[\exp(\theta X(\vec{r}))]\\
	\le& 
	\inf_{\theta > 0} \sup_{\vec{r}\in U(\rho)} \exp(-\theta/N)
	\mathbb{E}[\exp(\theta X(\vec{r}))],
	\label{eqn:math problem}
\end{align}
where the interchange of the infimum and supremum 
follows from the max-min inequality~\cite{boyd2004convex}.

Therefore, our problem boils down to upper-bounding $\sup_{\vec{r}\in U(\rho)} \mathbb{E}[\exp(\theta X(\vec{r}))]$, 
the worst-case MGF
of $X({\vec{r}})$, which is established in the following theorem.



\begin{theo}
	\label{theo:overflow bound mgf}
When $\frac{2}{3} + \frac{1}{3N^2} \le \rho < 1$, we have
	\begin{align*}
		\sup_{\vec{r}\in U(\rho)} \mathbb{E}[\exp(\theta X(\vec{r}))] \le 
		(h(p^*(\theta \alpha), \theta\alpha))^{N/2}
		\exp(\theta \rho / N),
	\end{align*}
	where 
	$h(p,a) = p \exp\left(a (1-p)\right) + 
	(1-p) \exp\left(-a p\right)$, 
	and 
	$$p^*(a) = \frac{\exp(a) - 1 - a}{\exp(a)a - a}$$ 
	is the maximizer of $h(\cdot, a)$ for a given $a$.
\end{theo}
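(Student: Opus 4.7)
The plan is to prove the bound in three stages: a per-VOQ reformulation of $X(\vec r)$, a decoupling via negative association that produces a product of univariate MGFs, and a worst-case optimization over admissible $\vec r$ that yields the $N/2$ exponent.

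First, I would recast $X(\vec r)$ from the per-intermediate-port view to a per-VOQ view. A VOQ $i$ contributes its load-per-share $s_i$ to the queue at intermediate port $1$ precisely when its dyadic stripe interval of size $f_i$ begins at $0$, i.e., iff $\sigma(i)\le f_i$. Thus
\[
X(\vec r)=\sum_{i=1}^{N} s_i\,\mathbbm{1}_{\{\sigma(i)\le f_i\}},
\]
whose mean is exactly $\sum_i s_i f_i/N=\rho/N$, and this will eventually account for the $e^{\theta\rho/N}$ factor. Since $\sigma$ is a uniform random permutation, $(\sigma(1),\ldots,\sigma(N))$ is negatively associated (Joag-Dev--Proschan), and monotone functions of disjoint coordinates preserve NA, so the indicators $Y_i:=\mathbbm{1}_{\{\sigma(i)\le f_i\}}$ are NA. Applying NA to the nondecreasing maps $y\mapsto e^{\theta s_i y}$, and then using the algebraic identity $pe^a+1-p=e^{ap}h(p,a)$ with $p=f_i/N$ and $a=\theta s_i$, I obtain
\[
\mathbb{E}[e^{\theta X(\vec r)}]\le\prod_{i=1}^{N}\Bigl(\tfrac{f_i}{N}e^{\theta s_i}+1-\tfrac{f_i}{N}\Bigr)
= e^{\theta\rho/N}\prod_{i=1}^{N}h\bigl(\tfrac{f_i}{N},\theta s_i\bigr).
\]

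What remains is to bound $\sup_{\vec r\in U(\rho)}\prod_i h(f_i/N,\theta s_i)$ by $h(p^*(\theta\alpha),\theta\alpha)^{N/2}$. Two properties of $h$ drive the argument. First, $\partial_a h(p,a)=p(1-p)(e^{a(1-p)}-e^{-ap})\ge 0$ for $a\ge 0$, so combined with the stripe-size rule~(\ref{eqn:stripe size rule}) (which caps $s_i\le\alpha$ for VOQs with $f_i<N$), each $\theta s_i$ can be lifted to $\theta\alpha$ without decreasing the product. Second, $h(\cdot,\theta\alpha)$ is maximized over $p\in[0,1]$ at $p=p^*(\theta\alpha)$, so every factor is at most $h(p^*(\theta\alpha),\theta\alpha)$. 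The subtlety is that not all $N$ factors can simultaneously be large: the rate constraint $\sum_i s_i f_i=\rho$ together with the dyadic integrality of the $f_i$ limits how many of them can lie near $p^*(\theta\alpha)N$. A discrete exchange / water-filling argument in the spirit of Lemma~\ref{lemm:fill water}, walking up and down the dyadic tree of stripe intervals and rebalancing mass between VOQs, should show that the product is maximized when at most $N/2$ VOQs contribute a factor of $h(p^*(\theta\alpha),\theta\alpha)$ and the remaining factors are bounded by $1$, giving the exponent $N/2$.

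The main obstacle will be this final exchange step. The feasible set is non-convex (each $f_i$ is a power of two tied to $r_i$ through~(\ref{eqn:stripe size rule})), so I expect the maximization to combine the discrete reassignment moves used in Lemma~\ref{lemm:fill water} with the continuous shape of $\log h(p,\theta\alpha)$ near its peak $p^*(\theta\alpha)$. VOQs with $f_i=N$, where $s_i$ can exceed $\alpha$, will need separate handling: since such a VOQ's stripe interval always contains intermediate port $1$, its contribution is deterministic, inflates only the mean $\rho/N$, and can be absorbed into the $e^{\theta\rho/N}$ factor without entering the centered product.
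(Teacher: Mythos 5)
Your first two stages are sound: the per-VOQ reformulation $X(\vec r)=\sum_{i} s_i\,\mathbbm{1}_{\{\sigma(i)\le f_i\}}$ is correct, the indicators $\mathbbm{1}_{\{\sigma(i)\le f_i\}}$ are indeed negatively associated (uniform permutation plus monotone functions of disjoint coordinates), and the algebraic identity $pe^{a}+1-p=e^{ap}h(p,a)$ gives the clean factorization $\mathbb{E}[e^{\theta X}]\le e^{\theta\rho/N}\prod_{i}h(f_i/N,\theta s_i)$. The gap is in stage three: the claim $\prod_{i}h(f_i/N,\theta s_i)\le h(p^*(\theta\alpha),\theta\alpha)^{N/2}$ is \emph{false} in general, so no exchange or water-filling argument can prove it. Concretely, take $N{-}1$ VOQs with $f_i=N/2$ and $s_i=\alpha$ (so each $h$-factor equals $h(1/2,\theta\alpha)$, which is close to the maximum $h(p^*,\theta\alpha)$ for small $\theta\alpha$), and put the remaining rate on a single VOQ with $f_i=N$; this is feasible under $\sum_i s_i f_i=\rho$ for $\rho$ in the theorem's range, and yields $N{-}1$ nondegenerate factors, so the product behaves like $h(p^*,\theta\alpha)^{N-1}$, strictly exceeding the target. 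A small numerical check ($N=4$, $\rho=0.7$, $\theta=10$) already gives $\prod_i h(f_i/N,\theta s_i)\approx 1.155$ versus $h(p^*,\theta\alpha)^{N/2}\approx 1.102$.

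The reason the paper gets the exponent $N/2$ is not a constrained maximization over $\vec r$; it is a different decomposition that caps the number of NA factors \emph{structurally}. After Lemma~\ref{lemma:extreme value property} fixes $\vec s$ at an extreme point, the paper splits $X=X^d+X^L+X^U$ with $X^d$ the deterministic contribution of $f_i=N$ VOQs, and $X^L,X^U$ each a sum of indicators scaled by the \emph{uniform} amount $\alpha/2$. Because the associated stripe sizes $f^L,f^U$ are all $\le N/2$, rewriting these sums per intermediate port truncates them to $\ell=1,\ldots,N/2$: there are at most $N/2$ (not $N$) nontrivial Bernoulli terms in each. Cauchy--Schwartz on $e^{\theta X^L}\cdot e^{\theta X^U}$ then converts the two $(\cdot)^{N/2}$ products at parameter $2\theta$ back into a single $h(p^*(\theta\alpha),\theta\alpha)^{N/2}$, and the residual exponentials sum to exactly $e^{\theta\rho/N}$ by centering. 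In your per-VOQ view, the sum always has up to $N$ terms and you do not get the truncation, which is exactly why the exponent cannot drop to $N/2$. To repair the proposal you would need to (i) invoke the extremal lemma to force $s_i\in\{0,\alpha/2,\alpha\}$, (ii) split each $s_i$ into two $\alpha/2$ halves (this is where Cauchy--Schwartz enters), and (iii) switch from per-VOQ to per-intermediate-port indexing for each half so that only $N/2$ ports can fire; at that point you would essentially be reproducing the paper's argument.
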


For any given switch size $N$ and a total load $\rho < 1$,
Theorem \ref{theo:overflow bound mgf} and the Chernoff bound 
(\ref{eqn:math problem}) establish an upper bound on the above-mentioned worst-case probability.
Table \ref{tbl:example of bounds} provides the values of this probability 
under various settings of $N$ and $\rho$.  For example, when a Sprinklers switch has 2048 input and output 
ports (i.e., $N = 2048$) and is $93\%$ loaded on input port $1$, this upper bound is only
$3.09 \times 10^{-18}$.  Note however that it upper-bounds only the probability that a single queue, 
namely the set of packets at input port 1 that need to be switched to the intermediate ports 1, is overloaded.
As mentioned earlier, there are altogether $2N^2$ such queues.  Per the union bound, the probability that
one or more such queues are overloaded across the entire $2048\times 2048$ switch 
is simply $2N^2$ times that.  In this example, the bound on the switch-wide overloading probability is
$1.30 \times 10^{-11}$.  

\begin{table}[!htbp]
	\caption{Examples of overload probability bound}
	\label{tbl:example of bounds}
	\centering
	\begin{tabular}{cccc}
		\hline
		$\rho$ & $N=1024$ & $N=2048$ & N = 4096\\\hline
		0.90 & $1.21\times 10^{-18}$ & $1.14 \times 10^{-29}$ & $6.10 \times 10^{-30}$\\
		0.91 & $3.06\times 10^{-15}$ & $4.91 \times 10^{-29}$ & $7.10 \times 10^{-30}$\\
		0.92 & $3.54\times 10^{-12}$ & $1.26 \times 10^{-23}$ & $9.10 \times 10^{-30}$\\
		0.93 & $1.76\times 10^{-9}$ & $3.09 \times 10^{-18}$  & $1.58 \times 10^{-29}$\\
		0.94 & $3.76\times 10^{-7}$ & $1.42 \times 10^{-13}$  & $2.00 \times 10^{-26}$\\
		0.95 & $3.50 \times 10^{-5}$ & $1.22 \times 10^{-9}$  & $1.48 \times 10^{-18}$\\
		0.96 & $1.41 \times 10^{-3}$ & $1.99 \times 10^{-6}$  & $3.97 \times 10^{-12}$\\
		0.97 & $2.50 \times 10^{-2}$ & $6.24 \times 10^{-4}$  & $3.90 \times 10^{-7}$\\
		\hline
	\end{tabular}
\end{table}

From these bounds, we can see that despite 
the use of the randomized and variable-size striping needed to prevent packet reordering and 
keeping the packet buffering delay within a reasonable range, a Sprinklers switch can achieve very high 
throughputs ($> 90\%$).  In fact, the actual overloading probabilities could be orders of magnitude smaller,
due to unavoidable relaxations used to obtain such bounds.
Another interesting phenomena reflected in
Table \ref{tbl:example of bounds} is that, given a traffic load, this bound decreases rapidly
as the switch size $N$ increases.  In other words, the larger a Sprinklers switch is, the higher the throughput
guarantees it can provide.
This is certainly a desired property for a scalable switch.

\subsection{Proof of Theorem \ref{theo:overflow bound mgf}}
\label{sec:main proof}

Recall $f_i = F(r_i)$ and $s_i = \frac{r_i}{f_i}$ for $i = 1, 2, \cdots, N$.
For the convenience of our 
later derivations, we rewrite $X$ as a function of $\vec{f}$ and $\vec{s}$
by defining $X(\vec{s}, \vec{f}) = X(\vec{s} \circ \vec{f})$, where $\vec{s} \circ \vec{f} = (f_1 \cdot s_1, 
f_2 \cdot s_2, \cdots, f_N \cdot s_N)$.  With this rewriting,
we can convert the optimization problem $\sup_{\vec{r} \in U(\rho)} 
\mathbb{E} [\exp(\theta X(\vec{r}))]$ in Theorem 2 to the following:
\begin{align}
\label{eqn:problem s F}
	\sup_{\langle \vec{f}, \vec{s} \rangle \in A(\rho)} 
	\mathbb{E} [\exp(\theta X(\vec{s}, \vec{f}))],
\end{align}
where 
\begin{align}
	A(\rho) := \{ 
		\langle \vec{f}, \vec{s} \rangle 
				| & f_\ell = 2^{k_\ell} \text{ for some integer } k_{\ell}, \forall \ell
				\label{eqn:power of 2 stripe size} 
			     \\
			     \label{eqn:per share condition 1}
			     & s_{\ell} \in [0, \alpha] \text{ if } f_{\ell} = 1,\\
			     \label{eqn:per share condition 2}
			     & s_{\ell} \in (\alpha/2, \alpha] \text{ if } f_{\ell} = 2, 4, \cdots, N/2\\
			     \label{eqn:per share condition N}
			     & s_{\ell} \in (\alpha/2, 1/N] \text{ if } f_{\ell} = N, \\
			     & \sum_{\ell = 1}^N s_\ell f_{\ell} = \rho
	\}.
\end{align}
The rule for stripe size determination~(\ref{eqn:stripe size rule}) induces the
conditions (\ref{eqn:power of 2 stripe size}) - (\ref{eqn:per share condition N}) above.


We now replace the half-open intervals in conditions (\ref{eqn:per share condition 2}) and (\ref{eqn:per share condition N}) by the closed ones, which is equivalent to replacing $A(\rho)$ with its closure $\bar{A}(\rho)$, because it is 
generally much easier to work with a closed set in an optimization problem.  
The resulting optimization problem is
\begin{align}
\label{eqn:problem s F closed}
	\max_{\langle \vec{f}, \vec{s} \rangle \in \bar{A}(\rho)} 
	\mathbb{E}_{\sigma} [\exp(\theta X(\vec{s}, \vec{f}))].
\end{align}
This replacement is logically correct because it will not decrease the optimal value of this optimization 
problem, and we are deriving an upper bound of it.

Our approach to Problem (\ref{eqn:problem s F closed}) is an exciting combination of convex optimization 
and the theory of negative associations in statistics~\cite{joag1983negative}, glued together by the 
Cauchy-Schwartz inequality.
We first show
that $\mathbb{E}[ \exp(\theta X(\vec{s}, \vec{f}))]$, viewed as a function of only $\vec{s}$ (i.e., 
with a fixed $\vec{f}$), achieves its maximum
over $\bar{A}(\rho)$ when $\vec{s}$ satisfies certain extremal conditions, through the convex optimization theory
in Sec.~\ref{sec:Extremal Property}.
Then with $\vec{s}$ fixed at any such extreme point, we can decompose $X(\vec{s}, \vec{f})$
into the sum of two sets of negatively associated (defined below) random variables.  Using properties of 
negative association, we are able to derive a very tight upper bound on $\mathbb{E}[ \exp(\theta X(\vec{s}, \vec{f}))]$ in
Sec.~\ref{sec:Negative Association}.




 \begin{defi}
 	Random variables $X_1$, $X_2$, $\cdots$, $X_k$ are said to be negatively associated
 	if for every pair of disjoint subsets $A_1$, $A_2$ of $\{1, 2, \cdots, k\}$,
 	\begin{align*}
 		Cov\{g_1(X_i,i \in A_1), g_2(X_j, j \in A_2)\} \le 0,
 	\end{align*}
 	whenever $g_1$ and $g_2$ are nondecreasing functions.
 \end{defi}


In the proof below, we need to use the following properties of negative association~\cite{joag1983negative}.
 \begin{lemm}
 \label{prty:na to product form}
 	Let 
 	$g_1, g_2, \cdots, g_k$ be nondecreasing positive functions of one variable. Then $X_1, \cdots,
 	X_k$ being negatively associated implies 
 	\begin{align*}
 		\mathbb{E} \left[ \prod_{i=1}^k g_i(X_i) \right] \le 
 		\prod_{i=1}^k \mathbb{E} \left[g_i(X_i) \right].
 	\end{align*}
 \end{lemm}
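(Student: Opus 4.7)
The plan is to prove the inequality by induction on $k$, leveraging the defining covariance inequality of negative association in its most direct form. The base case $k=1$ is trivial (equality holds). For $k=2$, since $g_1$ and $g_2$ are nondecreasing, the definition of negative association applied to the disjoint subsets $A_1 = \{1\}$ and $A_2 = \{2\}$ immediately yields $\text{Cov}(g_1(X_1), g_2(X_2)) \le 0$, which, after expanding the covariance, is exactly the claim.

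For the inductive step, assume the inequality holds for $k-1$, and consider $X_1, \ldots, X_k$. The idea is to split $\{1, 2, \ldots, k\}$ into $A_1 = \{1, 2, \ldots, k-1\}$ and $A_2 = \{k\}$, and define the aggregate function $G_1(x_1, \ldots, x_{k-1}) := \prod_{i=1}^{k-1} g_i(x_i)$ and $G_2(x_k) := g_k(x_k)$. Because each $g_i$ is nondecreasing and strictly positive, $G_1$ is a nondecreasing function in each coordinate, and $G_2$ is trivially nondecreasing. Applying the defining covariance inequality of negative association to this choice of $G_1, G_2$ gives
\begin{align*}
\mathbb{E}\!\left[\prod_{i=1}^{k} g_i(X_i)\right]
= \mathbb{E}[G_1 \cdot G_2]
\le \mathbb{E}[G_1] \cdot \mathbb{E}[G_2]
= \mathbb{E}\!\left[\prod_{i=1}^{k-1} g_i(X_i)\right] \cdot \mathbb{E}[g_k(X_k)].
\end{align*}
Then the inductive hypothesis applied to $X_1, \ldots, X_{k-1}$ (which remain negatively associated, as NA is preserved under taking subsets) bounds the first factor by $\prod_{i=1}^{k-1} \mathbb{E}[g_i(X_i)]$, and combining yields the desired product-form inequality.

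The only subtlety worth flagging is that one must verify $G_1$ is genuinely nondecreasing in each argument jointly, which is where the positivity hypothesis on the $g_i$ is essential: a product of nondecreasing functions need not be monotone without a sign condition, but when every factor is nonnegative, increasing any single $x_i$ can only increase (or preserve) the product. Everything else reduces to bookkeeping: stability of negative association under restriction to subsets of indices, and the equivalence between the stated covariance inequality and the corresponding expectation-of-product inequality. This is a standard result in the NA literature, originally due to Joag-Dev and Proschan, so the proof should be short and essentially a clean induction rather than requiring any new technical machinery.
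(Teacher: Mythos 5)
The paper states this lemma as a known property of negative association and cites Joag-Dev and Proschan~\cite{joag1983negative} without reproducing a proof, so there is no in-paper argument to compare yours against. Your induction is the standard proof of this fact and it is correct. The $k=2$ case is precisely the covariance inequality in the definition of negative association, and the inductive step applies that same inequality to the block split $A_1=\{1,\dots,k-1\}$, $A_2=\{k\}$ with $G_1(x_1,\dots,x_{k-1})=\prod_{i<k}g_i(x_i)$ and $G_2=g_k$, then invokes the inductive hypothesis on the first factor. Both facts you flag as needing a word of justification are genuine and both are routine: a product of nonnegative coordinatewise-nondecreasing functions is itself coordinatewise nondecreasing (this is where the positivity hypothesis earns its keep), and any subcollection of negatively associated variables is negatively associated (immediate from the definition, since disjoint subsets of $\{1,\dots,k-1\}$ are a fortiori disjoint subsets of $\{1,\dots,k\}$). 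The only loose end worth a remark in a fully rigorous writeup is integrability --- the covariance form of the definition presupposes $\mathbb{E}[G_1]$, $\mathbb{E}[G_2]$, and $\mathbb{E}[G_1 G_2]$ are finite --- but in the paper's application the $g_i$ are bounded exponentials of indicators, so this is automatic.
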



\begin{lemm}
\label{lemma:permutation is NA}
	Let $x = (x_1, \cdots, x_k)$ be a set of $k$ real numbers. $X=(X_1, \cdots, X_k)$ 
	is a random vector, taking
	values in the set of $k!$ permutations of $x$ with equal probabilities. Then
	$X=(X_1, \cdots, X_k)$ is negatively associated.
\end{lemm}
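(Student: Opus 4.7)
The plan is to prove the lemma by induction on the dimension $k$, mirroring the classical argument of Joag-Dev and Proschan~\cite{joag1983negative}.

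For the base case $k=2$, the random vector $(X_1, X_2)$ takes only the two values $(x_1, x_2)$ and $(x_2, x_1)$, each with probability $\tfrac{1}{2}$. The only nontrivial disjoint pair of subsets is $A_1 = \{1\}$, $A_2 = \{2\}$ (or its symmetric counterpart), and a direct computation gives
\begin{equation*}
\mathrm{Cov}(g_1(X_1), g_2(X_2)) = -\tfrac{1}{4}\bigl(g_1(x_1) - g_1(x_2)\bigr)\bigl(g_2(x_1) - g_2(x_2)\bigr),
\end{equation*}
which is nonpositive because $g_1$ and $g_2$ are both nondecreasing, so the two factors share a sign.

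For the inductive step, I would fix disjoint $A_1, A_2 \subseteq \{1, \ldots, k\}$ and nondecreasing functions $g_1, g_2$, pick an index $i_0 \notin A_1$ (which must exist since $|A_1| < k$), and condition on the value of $X_{i_0}$. Given $X_{i_0} = x_j$, the residual vector $(X_i)_{i \ne i_0}$ is a uniform random permutation of $(x_i)_{i \ne j}$, so the inductive hypothesis furnishes a conditional covariance inequality between $g_1(X_{A_1})$ and the restriction of $g_2$ to coordinates $A_2 \setminus \{i_0\}$. I would then integrate this conditional inequality over the law of $X_{i_0}$, using the base-case argument to handle the residual monotone coupling between $X_{i_0}$ and the coordinates in $A_1$, to obtain the unconditional covariance bound.

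The main obstacle I anticipate is the integration step: conditional NA does not automatically upgrade to unconditional NA, and one must carefully account for how $g_2$ may depend on $X_{i_0}$ itself. The cleanest way to sidestep this is to appeal to the alternative representation in~\cite{joag1983negative} that realizes a uniform random permutation of $(x_1, \ldots, x_k)$ as the conditional law of $k$ independent random variables (with suitable log-concave densities) given their sum, from which NA follows directly via the JDP conditioning theorem. My proof would present the induction for transparency and invoke the JDP representation as a cross-check.
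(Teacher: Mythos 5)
First note that the paper does not prove this lemma; it is stated, together with the closure-under-products property (Lemma~\ref{prty:na to product form}), as a known fact imported directly from~\cite{joag1983negative}, so a citation is what the paper expects here and there is no in-paper argument for you to match.

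On your blind attempt: the $k=2$ base case is correct, but the inductive step has a concrete gap of exactly the kind you flag, and the fallback does not repair it. By the law of total covariance,
\begin{align*}
\mathrm{Cov}\bigl(g_1(X_{A_1}),g_2(X_{A_2})\bigr)
= \mathbb{E}\bigl[\mathrm{Cov}\bigl(g_1,g_2\mid X_{i_0}\bigr)\bigr]
+ \mathrm{Cov}\bigl(\mathbb{E}[g_1\mid X_{i_0}],\,\mathbb{E}[g_2\mid X_{i_0}]\bigr),
\end{align*}
and the inductive hypothesis only controls the first term. For the second: with $i_0\notin A_1$, removing a larger value $x_j$ from the residual pool makes the remaining entries stochastically smaller, so $\mathbb{E}[g_1(X_{A_1})\mid X_{i_0}=x_j]$ is nonincreasing in $j$. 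If also $i_0\notin A_2$, the same holds for $g_2$, and the second term is a covariance of two nonincreasing functions of $X_{i_0}$, which is nonnegative by Chebyshev's sum inequality --- the wrong sign, so the induction cannot close. If instead $i_0\in A_2$, the direct (nondecreasing) dependence of $g_2$ on $X_{i_0}$ and the (nonincreasing) residual-pool effect pull in opposite directions, and nothing in your sketch pins down the net sign. Your fallback is also mischaracterized: the JDP conditioning theorem realizes a uniform permutation of a \emph{two-valued} sequence (sampling without replacement) as i.i.d.\ Bernoulli conditioned on their sum, but a uniform permutation of $k$ \emph{distinct} reals is not the conditional law of any independent PF$_2$ family given its sum, since such a conditional law necessarily puts mass on outcomes with repeated coordinates, which a permutation distribution never has. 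The general statement in~\cite{joag1983negative} requires a genuinely more delicate argument; the right move, and the one the paper takes, is simply to cite it.
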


\subsubsection{Extremal Property of $\vec{s}$ in An Optimal Solution}
\label{sec:Extremal Property}

\begin{lemm}
\label{lemma:extreme value property}
	Given $\frac{2}{3} + \frac{1}{3N^2} \le \rho < 1$, for any $\theta > 0$, 
	there is always an optimal solution $\langle \vec{f}^*, \vec{s}^* \rangle$  to Problem
	(\ref{eqn:problem s F closed}) that satisfies the following property:	
	\begin{align}
		\label{eqn:extreme property}
		s_j^* = \left \{
		\begin{array}
			{ll}
			0 \text{ or } \loadshare , & \text{if } f_j^* = 1;\\
			\frac{\loadshare }{2} \text{ or } \loadshare , 
			& \text{if } 2 \le f_j^* \le \frac{N}{2}.
		\end{array}
		\right.
	\end{align}
\end{lemm}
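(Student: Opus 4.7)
\emph{Plan.} The strategy is to reduce the maximization to a vertex problem via convexity of
$g(\vec{s}) := \mathbb{E}_{\sigma}[\exp(\theta X(\vec{s},\vec{f}))]$
in $\vec{s}$ (with $\vec{f}$ held fixed), use a dimension count on vertices of the feasible polytope to bound the number of slack coordinates, and finish with an exchange argument that routes any remaining slack to an index with stripe size $N$.

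First, I fix $\vec{f}$ and regard $g$ as a function of $\vec{s}$ alone. For each realization of $\sigma$, the random variable
\[
X(\vec{s},\vec{f}) = \sum_{i=1}^N s_i \, \mathbbm{1}_{\{\sigma(i) \le f_i\}}
\]
is linear in $\vec{s}$, so $\exp(\theta X(\vec{s},\vec{f}))$ is convex in $\vec{s}$, and this convexity is preserved by expectation. Hence $g$ is convex on the compact polytope $P(\vec{f})$ defined by the box constraints (\ref{eqn:per share condition 1})--(\ref{eqn:per share condition N}) together with the single affine equality $\sum_\ell s_\ell f_\ell = \rho$. A continuous convex function on a compact polytope attains its maximum at a vertex, so the optimal $\vec{s}^*$ may be taken to be a vertex of $P(\vec{f})$.

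Next, I would count dimensions. The equality $\sum_\ell s_\ell f_\ell = \rho$ cuts $\mathbb{R}^N$ down to an $(N-1)$-dimensional affine subspace, so a vertex of $P(\vec{f})$ requires $N-1$ additional linearly independent active box constraints. Each coordinate contributes at most one such active constraint (its upper or lower box bound, not both), so at least $N-1$ of the coordinates must lie at a box endpoint, leaving at most one slack coordinate $s^*_{\ell_0}$ strictly inside its box. When $f^*_{\ell_0} = N$, the property (\ref{eqn:extreme property}) is automatic since the lemma imposes no restriction on $s^*_j$ for $f^*_j = N$.

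The main obstacle is ensuring that the unique slack coordinate (when one exists) can be chosen to satisfy $f^*_{\ell_0}=N$. I would address this with an exchange argument. In the regime $\rho \ge 2/3 + 1/(3N^2) > 1/2$, the equality $\sum_\ell s_\ell f_\ell = \rho$ together with the a priori bound $s_\ell f_\ell \le \alpha f_\ell \le 1/(2N)$ valid whenever $f_\ell \le N/2$ forces at least one index $\ell'$ with $f^*_{\ell'}=N$. Perturbing $\vec{s}^*$ along the 1D line $f_{\ell_0}\,\delta s_{\ell_0} + f_{\ell'}\,\delta s_{\ell'} = 0$ (all other coordinates frozen) keeps feasibility with respect to the equality, and convexity of $g$ along this line places the maximum at an endpoint of the feasible segment, at which either $s^*_{\ell_0}$ or $s^*_{\ell'}$ reaches a box boundary. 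Combining this exchange with, if necessary, a modification of $\vec{f}^*$ that only enlarges some $f_i$---which cannot decrease $g$ because $X$ is stochastically nondecreasing in each $f_i$---routes the lone slack to an index with $f^*=N$, yielding the claimed extremal property.
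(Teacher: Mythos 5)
Your high-level strategy---fixing $\vec f$, using convexity of $\vec s\mapsto\mathbb{E}[\exp(\theta X(\vec s,\vec f))]$ to restrict attention to a vertex of the box polytope, then counting active constraints to show at most one slack coordinate---matches the paper's proof exactly up through the dimension count. The gap is in the closing step. Your starting point $\vec s^*$ is a vertex, so $s^*_{\ell'}$ already sits at a box bound, which means $\vec s^*$ is \emph{already an endpoint} of the 1D feasible segment in the direction $f_{\ell_0}\,\delta s_{\ell_0}+f_{\ell'}\,\delta s_{\ell'}=0$. Convexity says the maximum over the segment is at an endpoint, but $\vec s^*$ is one of the endpoints and is optimal by hypothesis; in fact $g$ is \emph{strictly} convex along this direction (the directional derivative of $X$ is $\mathbbm{1}_{\{\sigma(\ell_0)\le f_{\ell_0}\}}-f_{\ell_0}/N$, which is not a.s.\ constant in $\sigma$), so the other endpoint is strictly worse. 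The exchange therefore cannot produce an equally good solution and does not route the slack to $\ell'$. The fallback ``enlarge some $f_i$'' move is also not well defined: increasing $f_i$ with $\vec s$ held fixed breaks the coupling constraint $\sum_\ell s_\ell f_\ell=\rho$, and restoring it requires shrinking some $s_j$, which may decrease $g$; monotonicity of $X$ in $\vec f$ alone does not rescue this.

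The paper sidesteps the obstacle with a preliminary aggregation: all indices with $f_\ell=N$ are merged into a single index $m$ (harmless, because a VOQ with $f_\ell=N$ contributes $s_\ell$ to $X$ deterministically, independent of $\sigma$), and---crucially---the upper bound on $s_m$ is dropped, so $s_m$ carries only the one-sided constraint $s_m\ge\alpha/2$. The load condition $\rho\ge 2/3+1/(3N^2)>1/2$ then forces $s_m>\alpha/2$ strictly (otherwise $\sum_\ell s_\ell f_\ell\le 1/2$), so at a vertex all $N-1$ tight box constraints must come from the other coordinates and the lone slack coordinate is forced to be $m$, with no exchange needed. If you want to salvage your route, you would need either to aggregate as the paper does, or to argue explicitly that the endpoint of your 1D segment reached by pushing $s_{\ell_0}$ to its bound is always within the segment (which is true here by comparing the box width $O(\alpha)$ of $s_{\ell_0}$ against the slack available in $s_{\ell'}$) \emph{and} that this endpoint has objective value $\ge$ that of $\vec s^*$, which your convexity argument alone does not supply.
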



\begin{proof}
	Suppose $\langle \vec{f}^*, \vec{s}^* \rangle$ is 
	an optimal solution to Problem (\ref{eqn:problem s F closed}). 
We claim that at
	least one scalar in $\vec{f}^*$ is equal to $N$, since otherwise,
	the total arrival rate is at most $\frac{1}{N^2} \cdot \frac{N}{2} \cdot N = \frac{1}{2}$ by the 
stripe size determination rule  (\ref{eqn:stripe size rule}), contradicting our assumption that
$\rho \ge \frac{2}{3} + \frac{1}{3N^2} > \frac{1}{2}$.
	In addition, if more than one scalars in $\vec{f}^*$ are equal to $N$, we add up the values of all these
scalars in $\vec{s}$, assign the total to one of them (say $s_m$), and zero out the rest.
	This aggregation operation will not change the total arrival rate to any intermediate port because
	an input VOQ with stripe size equal to $N$ will spread its traffic evenly to all the intermediate 
	ports anyway.  


	
	If the optimal solution $\langle \vec{f}^*, \vec{s}^* \rangle$ does not satisfy 
	Property (\ref{eqn:extreme property}), we show that we can find another 
	optimal solution satisfying
	Property (\ref{eqn:extreme property}) through the following
	optimization problem.  
	Note that now with $\vec{f}^*$ fixed, $\mathbb{E}[
		\myexp{\theta X(\vec{s}, \vec{f}^*)}]$ is a function of only $\vec{s}$.
	\begin{align}
		\max_{\vec{s}}:&\ 
		\mathbb{E}[
		\myexp{\theta X(\vec{s}, \vec{f}^*)}]
		\label{eqn:extreme value prob obj}\\
		\text{s.t.:}& 0 \le s_j \le \loadshare, 
		\forall j \text{ with } f^*_j = 1;
		\label{eqn:extreme value prob constr 1}\\
		&\ \frac{\loadshare}{2} \le s_j \le \loadshare, 
		\forall j \text{ with } 2 \le f^*_j \le N/2;
		\label{eqn:extreme value prob constr 2 n/2}\\
		&\ \loadshare/2 \le s_m;
		\label{eqn:extreme value prob constr n}\\
		&\ \sum_{j} s_j f^*_j = \rho.
		\label{eqn:extreme value prob constr total}
	\end{align}

	$\mathbb{E}[\myexp{\theta X(\vec{s}, \vec{f}^*)}] = \frac{1}{N!} \sum_{\sigma}
	\exp\left(\theta\sum_{j = 1}^N s_j 
	\mathbbm{1}_{\{f^*_j \ge \sigma(j)\}}\right)$  where 
	$\sigma(j)$ is the index of the intermediate port that VOQ $j$
	is mapped to. 
Given $\vec{f}^*$, 
the objective function 
	$\mathbb{E}[\myexp{\theta X(\vec{s}, \vec{f}^*)}]$ is clearly
	a convex function of $\vec{s}$. 

	The feasible region of $\vec{s}$ defined by 
	(\ref{eqn:extreme value prob constr 1}) - (\ref{eqn:extreme value prob constr total}) is
	a (convex) polytope.  By the convex optimization theory~\cite{boyd2004convex}, the (convex) objective function
	reaches its optimum at one of the extreme points of the polytope.  
	The assumption $\rho \ge \frac{2}{3} + \frac{1}{3N^2}$ implies that  
	Constraint (\ref{eqn:extreme value prob constr n}) cannot be
	satisfied with equality. Thus, for any extreme point of the polytope, 
	each of Constraint (\ref{eqn:extreme value prob constr 1}) and 
	(\ref{eqn:extreme value prob constr 2 n/2}) must be satisfied with equality on one side.
	Thus there exists at least one optimal solution 
	satisfying Property (\ref{eqn:extreme property}).
	This establishes the required result.
\end{proof}

\subsubsection{Exploiting Negative Associations}
\label{sec:Negative Association}

Suppose the optimization objective $\mathbb{E}[\myexp{\theta X(\vec{s}, \vec{f})}]$ is maximized at $\vec{s}^*$, an extreme point 
of the above-mentioned polytope, and $\vec{f}^*$.  
In this section, we 
prove that $\mathbb{E}[\myexp{\theta X(\vec{s}^*, \vec{f}^*)}]$ 
is upper-bounded by the right-hand side (RHS) of the inequality in Theorem 2, by exploiting the negative associations -- induced by $\vec{s}^*$ being an 
extreme point -- among 
the random variables that add up to $X(\vec{s}^*, \vec{f}^*)$.  For notational convenience, we drop the asterisk from 
$\vec{s}^*$ and $\vec{f}^*$, and simply write them as $\vec{s}$ and $\vec{f}$.  



Again for notational convenience, we further drop the terms 
$\vec{f}$ and $\vec{s}$ from $X(\vec{s}, \vec{f})$, and simply write it as $X$.
Such a dependency should be clear from the context.
We now split $X$ into two random variables $X^L$ and $X^U$ and a constant $X^d$, by splitting 
each $s_\ell$ into $s^L_\ell$ and $s^U_\ell$, and each $f_\ell$ into $f^L_\ell$ and $f^U_\ell$, as follows.
If $f_\ell = N$, we let $s^L_\ell = s^U_\ell = 0$. Otherwise,
we define $s^L_\ell = \min\{s_\ell, \alpha/2\}$ and 
$s^U_\ell = \max \{0, s_\ell - \alpha/2 \}$. 
We then define 
$f^L_\ell = f_\ell \mathbbm{1}_{\{s^L_\ell = \alpha/2\}}$ and 
$f^U_\ell = f_\ell \mathbbm{1}_{\{s^U_\ell = \alpha/2\}}$.
It is not hard to verify that $s^L_\ell + s^U_\ell = s_\ell$ and $f^L_\ell + f^U_\ell = f_\ell$, 
for $\forall \ell$ if $f_\ell \le N/2$.
We further define $X^L = \sum_{\ell = 1}^N 
\alpha/2 \cdot  \mathbbm{1}_{\{f^L_{\sigma^{-1}(\ell)} \ge \ell\}}$
and $X^U = \sum_{\ell = 1}^N 
\alpha/2 \cdot  \mathbbm{1}_{\{f^U_{\sigma^{-1}(\ell)} \ge \ell\}}$.
Finally, we define $X^d = \sum_{\ell=1}^N s_\ell \mathbbm{1}_{\{f_\ell = N\}}$.
$X^d$ is the total arrival rate from the (high-rate) VOQs that have 
stripe size $N$, which is a constant  because how much traffic any of these VOQs sends to 
the queue is not affected by which primary intermediate port $\sigma$ maps the VOQ to.  
It is not hard to verify that $X = X^d + X^U + X^L$.






By the Cauchy-Schwartz inequality, we have
\begin{align}
\nonumber
	\mathbb{E}[\exp(\theta X(\vec{r}))] &=
	\exp(\theta X^d(\vec{r})) \mathbb{E}
	[\exp(\theta X^U(\vec{r})) \exp(\theta X^L(\vec{r})) ]\\
	&\le 
	\exp(\theta X^d(\vec{r})) \left(\mathbb{E}[\exp(2\theta X^L(\vec{r}))]\right)^{1/2}\nonumber\\
	&\quad \cdot \left(\mathbb{E}[\exp(2\theta X^U(\vec{r}))]\right)^{1/2}.
	\label{eqn:Cauchy-Schwartz bound}
\end{align}


Let $p^L_\ell = \mathbb{P}(f^L_{\sigma^{-1}(\ell)} \ge \ell)$.  We now upper bound the MGF of $X^L$ as follows:
\begingroup
\allowdisplaybreaks
\begin{align}
	\mathbb{E}[\exp(2\theta X^L)] &= 
	\mathbb{E}\left[\exp\left(2\theta \cdot \alpha/2 \cdot
	\sum_{\ell=1}^{N} \mathbbm{1}_{\{f^L_{\sigma^{-1}(\ell)}\ge \ell\}} \right)\right]\nonumber\\
	&= 
	\mathbb{E}\left[ 
		\prod_{\ell = 1}^{N} 
		\exp\left(\theta \alpha \mathbbm{1}_{\{f^L_{\sigma^{-1}(\ell)}\ge \ell\}}\right)
	\right]\nonumber\\
	&\le 
	\prod_{\ell = 1}^{N} 
	\mathbb{E}\left[\exp\left(\theta \alpha \mathbbm{1}_{\{f^L_{\sigma^{-1}(\ell)} \ge \ell\}}\right)\right]\nonumber\\
	&=
	\prod_{\ell = 1}^{N/2} 
	\mathbb{E}\left[\exp\left(\theta \alpha \mathbbm{1}_{\{f^L_{\sigma^{-1}(\ell)} \ge \ell\}}\right)\right]\nonumber\\
	&\le 
	\prod_{\ell = 1}^{N/2} h(p^*(\theta \alpha), \theta \alpha)
	\exp(\theta \alpha p^L_\ell)\nonumber \\
	&=  (h(p^*(\theta \alpha), \theta \alpha))^{N/2}
	\prod_{\ell = 1}^{N/2}
	\exp(\theta \alpha p^L_\ell).	\label{eqn:bound on L}
\end{align}
\endgroup
The first inequality holds due to Lemma \ref{prty:na to product form} and the following two facts.  First, 
$\{f^L_{\sigma^{-1}(1)}, f^L_{\sigma^{-1}(2)}, \cdots, f^L_{\sigma^{-1}(N)}\}$,
as a uniform random permutation of $\{f^L_1, f^L_2, \cdots, f^L_N\}$, are negatively associated
according to 
Lemma~\ref{lemma:permutation is NA}.  Second,  $\exp(\theta \alpha\mathbbm{1}_{\{x \ge \ell\}})$ is a nondecreasing function of $x$ for any given $\ell$
when $\theta >0$.  The third equality holds because with the way we define $f^L$ above, $f^L_{\sigma^{-1}(\ell)} \le N/2$ for any $\ell$.
The last inequality holds because each term in the product on the left-hand side, $\mathbb{E}[\exp(\theta \alpha \mathbbm{1}_{\{f^L_{\sigma^{-1}(\ell)} \ge \ell\}})]$,
is upper bounded by the corresponding term in the product on the RHS, $h(p^*(\theta \alpha), \theta \alpha)\exp(\theta \alpha p^L_\ell)$.  It is not hard to verify 
because each $\mathbb{E}[\exp(\theta \alpha \mathbbm{1}_{\{f^L_{\sigma^{-1}(\ell)} \ge \ell\}})]$ is the MGF of a Bernoulli random variable scaled by a constant
factor $\theta \alpha$, and the function $h$ reaches its maximum at $p^*$ as defined in Theorem \ref{theo:overflow bound mgf}.

Letting
$p^U_\ell = \mathbb{P}(f^U_{\sigma^{-1}(\ell)} \ge \ell)$,
we can similarly bound $\mathbb{E}[\exp(2\theta X^U)]$ as follows:
\begin{align}
\label{eqn:bound on U}
	\mathbb{E}[\exp(2\theta X^U)] \le
	(h(p^*(\theta \alpha), \theta \alpha))^{N/2}
	\prod_{\ell = 1}^{N/2} 
	\exp(\theta \alpha p^U_\ell).
\end{align}

Combining (\ref{eqn:Cauchy-Schwartz bound}), (\ref{eqn:bound on L}),
and (\ref{eqn:bound on U}), 
we obtain
\begin{align}
\label{eqn:bound before centering}
	\mathbb{E}[\exp(\theta X)]
	\le &
	(h(p^*(\theta \alpha), \theta \alpha))^{N/2}\nonumber \\
	&\exp\left(\theta 
	(X^d + \frac{\alpha}{2}\sum_{\ell=1}^{N/2} (p^L_\ell + p^U_\ell))\right)\nonumber\\
	&= 
	(h(p^*(\theta \alpha), \theta \alpha))^{N/2}
	\exp\left(\theta\rho/N\right).\nonumber
\end{align}
The final equality holds since $\rho/N = \mathbb{E}[X]   = \mathbb{E}[X^d + X^L + X^U] = 
X^d + \frac{\alpha}{2}\sum_{\ell=1}^{N/2} (p^L_\ell + p^U_\ell))$.
This concludes the proof of the theorem.

\section{Expected delay at intermediate stage}
\label{sec:delay at intermediate}

In this section, we analyze the expected queue length at the intermediate ports. 
We are interested in this metric for two reasons. First, the 
queue length at these stations contribute to the total delay of the packets
in this switch. Secondly and
more importantly, when the switch detects changes of arrival 
rates to the input VOQs, 
the frame sizes of the input VOQs need to be redesigned. 
Before the packets with the new frame sizes are spread to the 
intermediate ports, the switch needs to make sure 
that the packets of previous frame sizes are all cleared. 
Otherwise, there could be reordering of the packets. 
The expected duration of this clearance
phase is the average queue length at the stations.

We assume that the arrivals in different periods are identical
and independently distributed. For a queuing system, given
that the arrival rate is smaller than the service rate, the
queue is stable and the queue length is mainly due to the 
burstiness of the arrivals. To make the arrivals have the maximum 
burstiness, we assume the arrival in each cycle ($N$ time slots) 
is a Bernoulli random variable that
takes value $N$ with probability $\frac{\rho}{N}$ and 
0 with probability $1 - \frac{\rho}{N}$.
We can construct a discrete time 
Markov process for the queue length \emph{at the end of each cycle}. Its state space is 
$\{0, 1, \cdots, \}$ and the nonzero transition probabilities are 
$\mathbb{P}_{i, i+N-1} = 1 - \frac{\rho}{N}$ for $i \ge 0$ and 
$\mathbb{P}_{i, i-1} = \frac{\rho}{N}$ for $i \ge 1$,
and $\mathbb{P}_{0, 0} = \frac{\rho}{N}$. 

Given the transition matrix, we can solve for the stationary distribution, from which
we can obtain the expected queue length. As figure \ref{fig:delay} shows, the average delay in units of 
period is linear in the switch size. Given that our delay is analyzed against the worst burstiness
and one cycle is at the magnitude of millisecond, this 
delay is very acceptable. 

\begin{figure}[ht]
  \centering
    \includegraphics[width=0.5\textwidth]{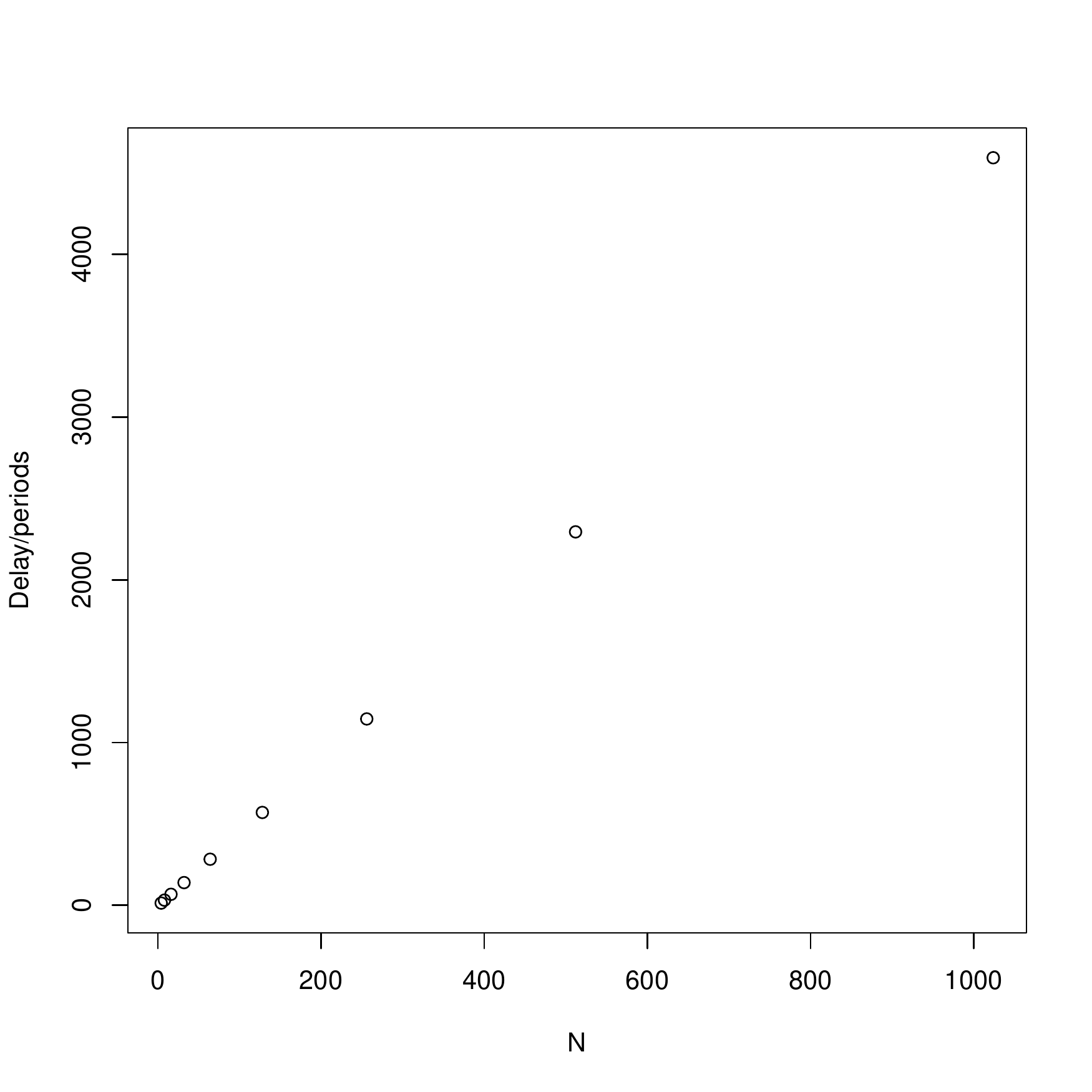}
    \caption{Expected delay when $\rho = 0.9$}
\label{fig:delay}
\end{figure}

\section{Simulation}
\label{sec:simu}

In this section, we present the simulation results of our switching algorithm
and other existing switching algorithms 
under different traffic scenarios. 
The switching algorithms studied in this section include the baseline load-balanced
switch~\cite{chang2002load}, the uniform frame spreading 
(UFS) switch~\cite{keslassy2004load}, the
full-ordered frame first (FOFF) switch~\cite{keslassy2004load}, and
the padded frame (PF) switch~\cite{jaramillo2008padded}. The baseline load-balanced 
switch does not guarantee packet ordering, but it provides the lower bound of
the average delay that a load-balanced switch can achieve. UFS, FOFF, and PF 
are known to provide reasonably good performance and all of them guarantee 
packet ordering.

In our simulation experiments, we assume a Bernoulli arrival to 
each input port, i.e. in each time slot, there is a packet arrival
with probability $\rho$. The $\rho$ parameter will be varied to study the performance 
of the switches under different loads. We also use different ways to 
determine the output port of each arrival packet. The size of the switch in the simulation
study is $N = 32$.

\begin{figure}[htb]
  \centering
    \includegraphics[width=0.45\textwidth]{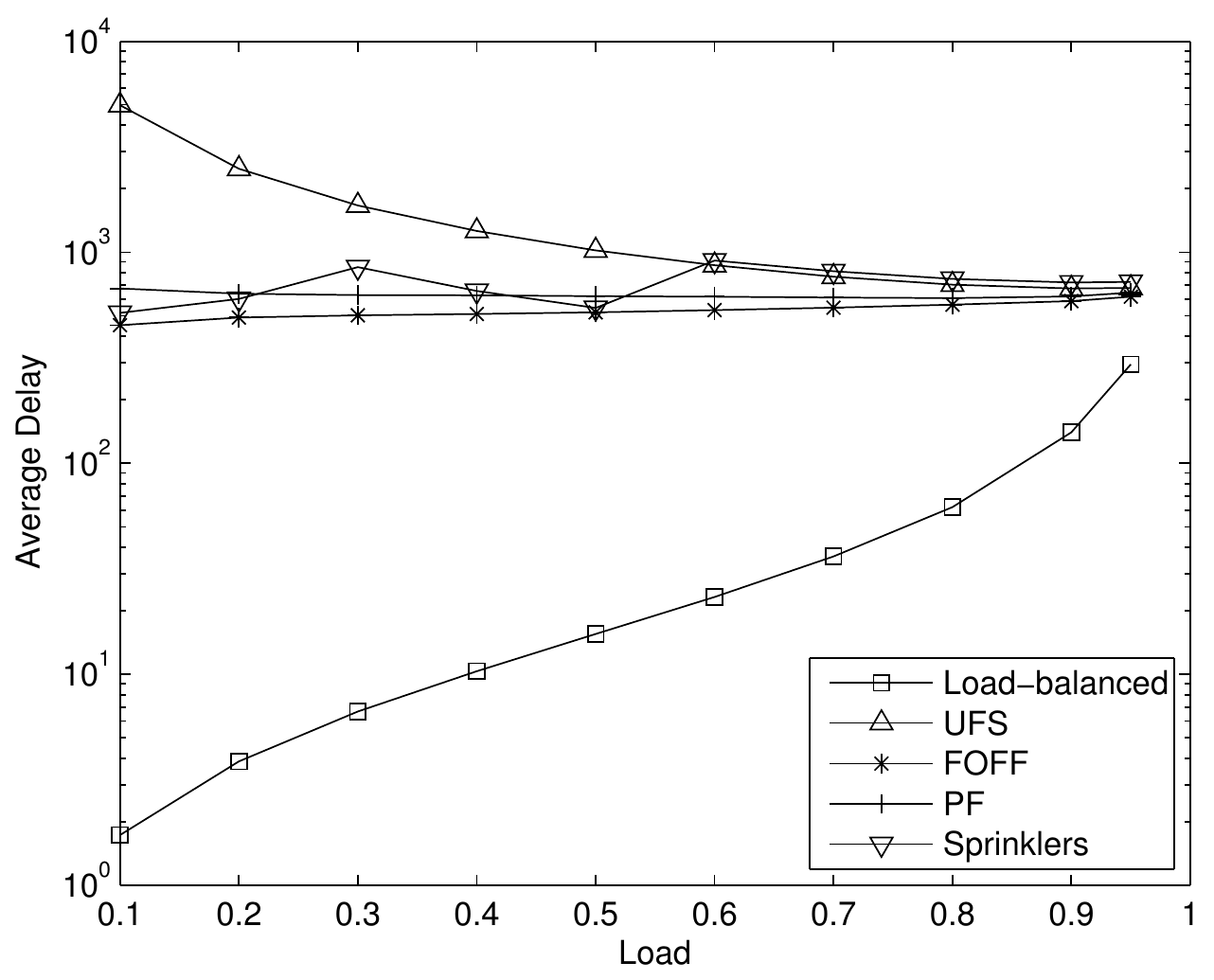}
    \caption{Average delay under uniform traffic}
\label{fig:uniform simulation}
\end{figure} 

\begin{figure}[htb]
  \centering
    \includegraphics[width=0.45\textwidth]{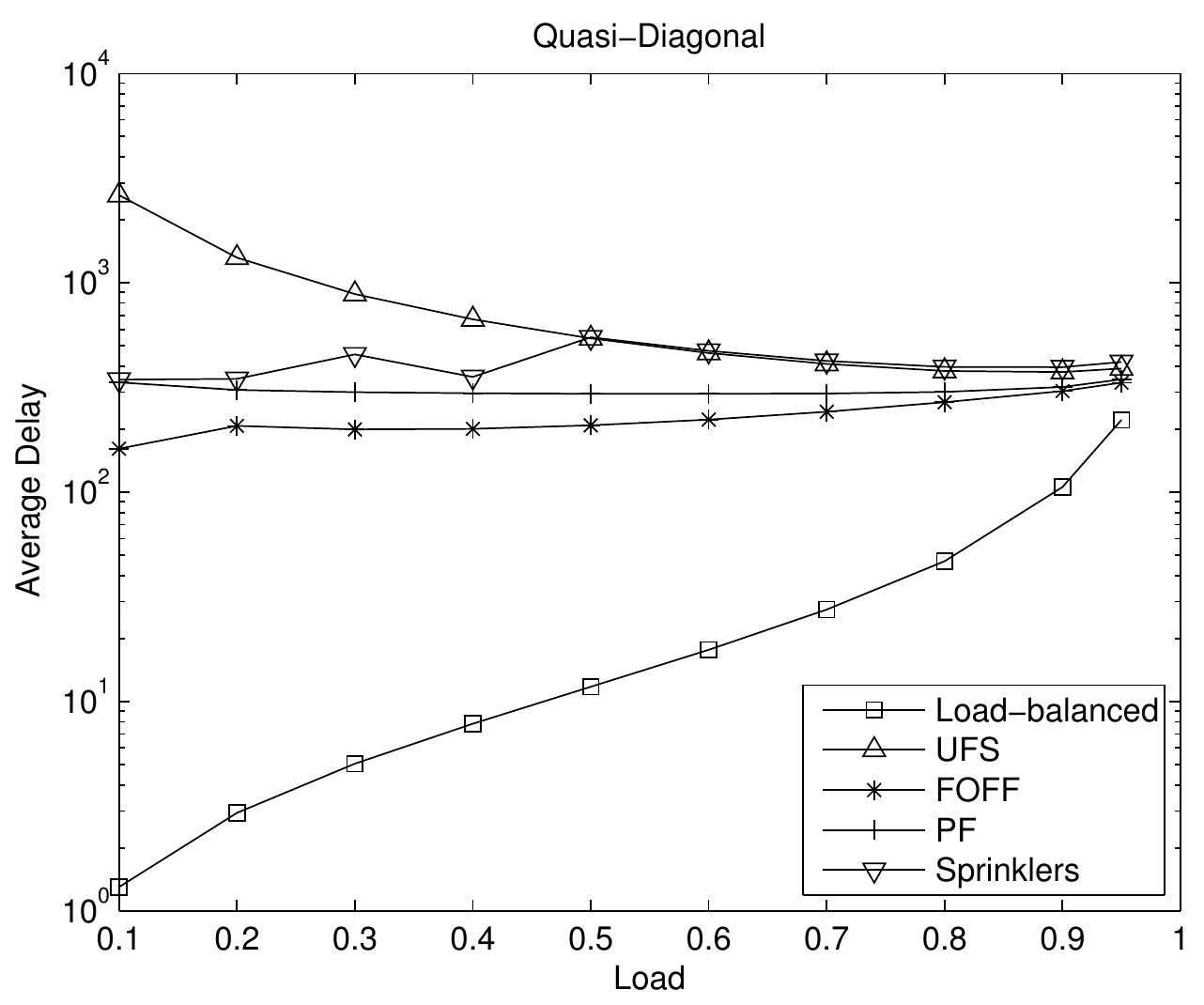}
    \caption{Average delay under diagonal traffic}
\label{fig:diagonal simulation}
\end{figure} 

Our first set of experiments assume uniform distribution of the destination port
for the arrival packets, i.e. a new packet goes to output $j$ 
with probability $\frac{1}{N}$. The simulation results are as shown in 
Fig.~\ref{fig:uniform simulation}. The second set of experiments assume a diagonal distribution. A new
packet arriving at input port $i$ goes to output $j = i$ with probability $\frac{1}{2}$,
and goes to any other output port with probability $\frac{1}{2(N-1)}$. 
The results are as shown in Fig.~\ref{fig:diagonal simulation}.

Our first observation of the comparison is that, 
under both the uniform and the diagonal traffic patterns, 
compared to UFS, our switch significantly reduces the average
delay when the traffic load is low. This is understandable because
the switch does not need to wait for additional packet to form a 
full frame. Secondly, like PF and FOFF, the average delay of our switching algorithm 
is quite stable under different traffic intensities. This property is 
quite preferable in practice. Finally, our switch has similar delay performance with
PF and FOFF while the implementation of our switch is simpler than the other two switches.


\section{Conclusion}
In this paper, we proposed Sprinklers,
a simple load-balanced switch architecture based on the combination of randomization and variable-size striping.
Sprinklers has comparable implementation cost and performance as the baseline load-balanced switch, but yet can guarantee packet ordering.
We rigorously proved using worst-case large deviation techniques that Sprinklers can achieve near-perfect load-balancing under arbitrary admissible traffic.
Finally, we believe that this work will serve as 
a catalyst to a rich family of solutions based on the simple principles of 
randomization and variable-size striping.
\label{sec:conc}

\bibliographystyle{plainnat}
\bibliography{sprinklers}


\end{document}